\documentclass[final]{lipics-v2021}

\nolinenumbers
\hideLIPIcs
\usepackage{wrapfig}
\usepackage{graphicx} 
\usepackage{amssymb,amsmath}
\usepackage{algorithm}
\usepackage[noend]{algpseudocode}
\usepackage{comment}
\usepackage{tikz}
\usepackage{caption}
\usepackage{subcaption}
\usetikzlibrary{arrows.meta, decorations.pathmorphing}

\newif\iffull\fullfalse

\newcommand{\E}{{\cal E}}

\newcommand{\PP}{{\cal P}}
\newcommand{\SP}{{\cal SP}}
\newcommand{\e}{{\varepsilon}}
\newcommand{\x}{{\xi}}
\newcommand{\C}{{\mathcal{C}}}

\renewcommand{\emptyset}{\varnothing}

\newcommand{\NP}{\emph{NP}}
\newcommand{\edge}{\!\rightarrow\!}
\usepackage{tikz-cd}
\usetikzlibrary{decorations.pathmorphing}
\usepackage[customcolors]{hf-tikz}
\hfsetfillcolor{gray!20} 
\hfsetbordercolor{white}
\newcommand{\pat}{\rightsquigarrow}
\newcommand{\W}{\cal W}

\title{A Decremental Algorithm for Checking the Possibility of Braess Paradox in Dynamic Nets}
\author{Dario {Fiorenza}}{Sapienza, University of Rome, Italy}{fiorenza@di.uniroma1.it}{}{}
\author{Daniele {Gorla}}{Sapienza, University of Rome, Italy}{gorla@di.uniroma1.it}{}{}
\author{Ivano {Salvo}}{Sapienza, University of Rome, Italy}{salvo@di.uniroma1.it}{}{}
\authorrunning{D. Fiorenza, D. Gorla, I. Salvo}
\Copyright{ }

\ccsdesc[500]{Theory of computation~Dynamic graph algorithms}
\ccsdesc[500]{Theory of computation~Algorithmic game theory}
\ccsdesc[300]{Theory of computation~Network games}

\keywords{Decremental Algorithm, Algorithmic Game Theory, Braess Paradox}
\date{October 2026}

\begin{document}

\maketitle

\begin{abstract}
Braess paradox originates when latency at Wardrop equilibrium in traffic networks decreases because of removing edges. The
graph-theoretic property of networks suffering from the Braess paradox
was called vulnerability by Roughgarden in 2006; it was then characterized
and algorithmically checked both for undirected and for directed nets. 
In this paper, we provide a decremental algorithm of linear amortized
complexity to
check vulnerability for dynamically evolving networks. The basic idea of our dynamic algorithm is to use a static
algorithm that marks some edges as irrelevant for the vulnerability of
the graph and ignore those edges for all subsequent runs of the decremental
procedure. To get a linear amortized cost for every edge remotion, we
also provide a new version of such a static algorithm that improves its
complexity from $\mathcal{O}(nm^2)$ to $\mathcal{O}(m^2)$, that is in turn aligned with the cost
of the best state-of-the-art static algorithm for vulnerability.

\keywords{Decremental Algorithm, Algorithmic Game Theory, Braess Paradox}
\end{abstract}

\section{Introduction}
\vspace{-2mm}
In selfish routing, non-cooperative agents travel from a source node $s$ to a destination node $t$, choosing their path in order to minimize their own  latency. 
{\em Traffic networks} \cite{braessFormal,BI97} provide a model for studying selfish routing:  
edges of a directed graph are labeled by a function 
specifying the latency that agents experience traveling on an edge 
in terms of the flow passing through it. 
Since the latency experienced by an agent 
while traveling along a path depends on network congestion, 
traffic in a network is modeled as a non-cooperative game, 
which stabilizes to equilibrium when all agents cannot improve their latency by 
choosing a different path.
This notion of equilibrium was defined by Wardrop \cite{War52} in the
context of transport analysis.

As a representative example, consider the so-called {\em Wheatstone network} in Fig.~\ref{fig:Wheat}(1).
The edges $u\edge v$, $s\edge v$ and $u\edge t$ are  labeled by the constant functions 0 and 1: 
independently on how much traffic travels along them,
the passage from $u$ to $v$ is instantaneous, whereas it creates a delay of 1 in the other two cases. 
Finally, the delay on edges $s\edge u$ and $v\edge t$ 
is linear in the amount of traffic traveling along them.
The (non-optimal) flow of value $1$ where all selfish agents choose the path $s\edge u\edge v\edge t$ is at Wardrop equilibrium, 
and the overall delay experienced is $1+0+1=2$.
In contrast, the (optimal) flow in which agents distribute half along the path $s\edge u\edge t$ and half 
along $s\edge v\edge t$ is {\em not} at equilibrium, but the delay experienced is smaller (viz., 3/2). 
This is indeed the delay obtained with a flow of 
value 1 at Wardrop equilibrium in the subnet of the Wheatstone network depicted in Fig.~\ref{fig:Wheat}(2).
{\em Braess paradox} \cite{braessFormal,braessOriginal} 
originates when latency at Wardrop equilibrium decreases because of removing edges:
the Wheatstone network is a minimal example of this counterintuitive phenomenon. 

\begin{figure}[t]
		\vspace{-2mm}
	\begin{tabular}{ccc}
		\begin{minipage}{0.29\textwidth}
			\center
			\begin{tikzcd}[column sep=.8cm,row sep=.4cm]
			& u \ar[rd, "1"]\ar[dd,"0"]\\
			s \ar[ur,"x"] \ar[dr,"1" '] && t\\
			& v \ar[ru,"x" ']\\
			&(1)
			\\
			\end{tikzcd}
		\end{minipage}
		&
		\begin{minipage}{0.29\textwidth}
			\center
			\begin{tikzcd}[column sep=.8cm,row sep=.4cm]
			& u \ar[rd,"1"]\\
			s \ar[ur,"x"] \ar[dr,"1" '] && t\\
			& v \ar[ru,"x" ']\\
			&(2)
			\\
			\end{tikzcd}
		\end{minipage}
		&
		\begin{minipage}{0.33\textwidth}
			\center
			\begin{tikzcd}[column sep=.8cm,row sep=.4cm]
			& u \ar[dd]\ar[dr] & \\
			s \ar[ru]\ar[rd] & & t\\ 
			& v \ar[ru]\\
			&(3)
			\\
        \vspace{-8mm}    
			\end{tikzcd}
		\end{minipage}
		\vspace{-6mm}
	\end{tabular}
	\caption{(1) The Wheatstone network; (2) Its optimal subnet; (3) The graph $\W$.}
	\label{fig:Wheat}
	\vspace{-5mm}
\end{figure} 

Braess paradox has been studied for decades and from many different perspectives. An important result is that the paradox is not merely an artificially constructed phenomenon arising under particular conditions; rather, in random networks, the probability of its occurrence tends to one as the size of the network increases \cite{VR06}. The paradox has therefore also been investigated in real-world applications, such as gas transportation networks \cite{ALB12}, electric power systems \cite{BS06}, and road networks \cite{BSAMBC13}.
From a theoretical perspective, several results guarantee the absence of the paradox, either through conditions on the flow distribution \cite{HA01} or on the structure of the strategies \cite{Matroids17}. Moreover, 
it has been investigated 
how detrimental the paradox can be in different settings, as well as how much the addition of the bridge could actually improve equilibria in some cases \cite{Penchina97}. Furthermore, it has been proved that the paradox can occur only for a bounded set of traffic demand values \cite{PP97} and some conditions under which the paradox arises have been devised when cooperative and competitive agents coexist \cite{MixedEquilibrium16}.

From an algorithmic point of view, checking if a network suffers from the Braess paradox 
has been proven to be a \NP-hard problem in \cite{Rough06}, where a change of perspective has been suggested: studying the Braess paradox from a graph-theoretical point of view, by characterizing and efficiently recognizing {\em vulnerable} graphs, i.e. those graphs that admit instances (i.e., assignments of latency functions to edges and traffic demand) that generate the paradox. 
A characterization of vulnerable undirected multigraphs is presented in~\cite{Milch06}, where
it is proved that an undirected graph is vulnerable if and only if it is not series-parallel \cite{RS42}. This characterization only holds for graphs where all nodes and edges lie on at least one 
simple $st$-path. In \cite{ChenEtal15}, the same characterization is proved for directed multigraphs that satisfy the very same condition, called {\em irredundancy} therein.  While checking (ir)redundancy 
and finding the \emph{Maximal Irredundant Subgraph} (MIS) 
can be efficiently done for undirected graphs \cite{CinesiFull}, the same does not hold for directed ones: in \cite{CGS19} it is proved that recognizing if an edge is redundant in a digraph is a \NP-hard problem.
Moreover, in \cite{CGS18} the authors prove that vulnerability coincides with containing an irredundant subgraph homeomorphic to the graph $\W$ of Fig.~\ref{fig:Wheat}(3)
and in \cite{CGS19} they provide a $\mathcal{O}(nm^2)$ algorithm to state if a given graph is vulnerable or not; in \cite{MS23} a faster algorithm is given, which runs in $\mathcal{O}(m^2)$.

Because of the inherently dynamic nature of routing games, the Braess Paradox has been studied from several perspectives within game theory also in its dynamic form. Indeed, many aspects of the problem may vary over time. For instance, traffic demand may fluctuate, as it is naturally the case during a typical working day, and the behavior of agents within the network may change accordingly. An interesting question is therefore to understand what happens in networks exhibiting Braess Paradox when the traffic demand varies \cite{CDS24,VC23}, by also allowing the demand to depend explicitly on time and study the resulting evolution of equilibria \cite{DP07,ZC14}.
Another aspect that may change over time is the underlying network itself; in particular, new links may be created or existing ones may be removed, possibly on a very short time scale. 
This is the setting where we place our work.

Indeed, a challenging algorithmic question is how to handle vulnerability in the context of dynamically evolving graphs \cite{DynBook}. Dynamic graph algorithms face problems like graph connectivity \cite{JR87}, the minimum spanning tree and many other graph problems \cite{JTDynProb01}. 
In our context, inter-node connections can change during time, either by appearing or by disappearing and this may consequently change the possibility of having the Braess Paradox. 
We first observe that, as an easy consequence of the characterizations given by  \cite{CGS18} and \cite{ChenEtal15}, 
adding an edge can make vulnerable a non-vulnerable graph, but it never happens that a vulnerable graph becomes non-vulnerable and, dually, removing an edge can make non-vulnerable a vulnerable graph, but not vice versa.

In a companion paper \cite{fgs26a}, 
we give an incremental algorithm to determine if adding an edge to a net makes it vulnerable: focusing on the non-vulnerable case, our dynamic algorithm works on the MIS  
of a non vulnerable graph, adapting the $\mathcal{O}(m^2)$ static algorithm given by \cite{MS23}. 
Such an algorithm has an amortized cost of $\mathcal{O}(m)$. 
%
Here, we address the decremental problem and 
assume that we have a vulnerable net with $n$ nodes and $m$ edges, where some of the latter ones can be sequentially removed from the net. 
In this case, we preliminarily observe that the algorithm given in  \cite{CGS19} is ``intrinsically'' decremental, 
as it iteratively eliminates redundant edges, which do not have impact on vulnerability of the graph.
Indeed, after giving some background in Section~\ref{sec:vuln}, we show in Section \ref{sec:decAlg} that 
any algorithm of complexity $\mathcal{O}(f(n,m))$ following the approach of \cite{CGS19} 
leads to a decremental algorithm of amortized cost of $\mathcal{O}(f(n,m)/m)$ after $\Theta(m)$ edge removals.
If we use the algorithm in \cite{CGS19} as it is, this would lead to a decremental algorithm of amortized cost of  $\mathcal{O}(nm)$, that is   unsatisfactory when compared to the amortized cost $\mathcal{O}(m)$ of the incremental case.
To obtain the desired amortized complexity, we improve the algorithm in \cite{CGS19} and we reduce its complexity from $\mathcal{O}(nm^2)$ to $\mathcal{O}(m^2)$ and, consequently, we also get 
a decremental algorithm of amortized cost $\mathcal{O}(m)$.

To do this, in Section \ref{sec:chars}, we provide
sufficient conditions for either 
recognizing net vulnerability or finding a redundant edge, and in Section \ref{sec:algos} we provide linear-time algorithms for checking such conditions. These algorithms are used in Section~\ref{sec:amortized} to yield the linear-time amortized decremental algorithm. Finally, Section~\ref{sec:concl} concludes the paper, by also pointing to directions for future research. Due to space constraints, some definitions and proofs have been deferred to the Appendices.   



\section{Preliminaries}
\label{sec:vuln}
\vspace{-2mm}    

%
A \emph{directed multigraph} (or  \emph{multi-digraph}) $G = (V,E)$ consists of a
set $V$ of \emph{nodes} (or \emph{vertices}) and a set $E$ of \emph{edges}. Each edge $e$ relates a pair of nodes $(u, v)$. 
%
When the name of an edge is not relevant but only its extremes are,
we denote an edge with the pair it relates.
 
A \emph{path} $P$ is a sequence $u_1 e_1 u_2 \ldots u_{n-1} e_{n-1} u_n$ (for $n \geq 1$) of nodes and edges 
such that $e_i$ relates $(u_i,u_{i+1})$, for all $i<n$;
$u_1$ and $u_n$ are called {\em extremes} and $u_2,\ldots,u_{n-1}$ are called {\em internal nodes}.
If no node appears more than once in $P$, we say that $P$ is {\em simple} (or {\em acyclic}).
In this paper, when we refer to a path we always mean that it is simple, if not explicitly said.

\begin{definition}
Given a path $P$ and two nodes $a,b\in P$, we write $a<_P b$ (or $a<b$ on $P$) if $a$ appears before $b$ in $P$; $\le$ is defined accordingly.
\vspace{-1mm}    
\end{definition}

Notationally, given a path $P$ from $a$ to $b$, we denote it as $P=a\pat b$ or $a\overset{P}{\pat}b$; 
if $c <_P d$, 
we write $c\overset{P}{\pat}d$ to denote the part of $P$ between $c$ and $d$. We use round/square parenthesis to exclude/include extremes; for example, $(a\pat b]$ means that we are excluding $a$ and including $b$. We use $P$ to mean $[a\pat b]$ and $\mathring P$ to mean $(a\pat b)$; two paths $P$ and $Q$ are {\em internally node disjoint} if $\mathring P\cap \mathring Q=\emptyset$.
We can join paths that only share one endpoint as follows: if $P=a\pat b$ and $Q=b\pat c$ are internally node disjoint, then $a\overset{P}{\pat}b\overset{Q}{\pat}c$ indicates the path that starts in $a$, follows $P$ until $b$, and finally follows $Q$ until $c$.


We fix two different nodes: a \emph{source}, written $s$, and a \emph{target}, written $t$.
The resulting triple $(G,s,t)$ is called {\em net}; when $s$ and $t$ are understood, we sometimes
simply write it as $G$ (i.e., without specifying the source and target) and use the term ‘graph' instead of ‘net'.
An \emph{st-path} is a 
path from $s$ to $t$. 
We say that a net is $st$-{\em connected} if every node is touched by at least 
one $st$-path.
%
%

\label{sec:IRRED}

The formal definitions of Braess Paradox and vulnerable nets are deferred to Appendix~\ref{app:braess}, 
since they are not fundamental to appreciate our results. Indeed, for our aims it suffices to know that a net is vulnerable if and only if one of the following two characterizations hold. 
The first characterisation of vulnerable nets holds for {\em irredundant} nets,
defined as those nets containing only irredundant nodes and edges.
As defined in \cite{ChenEtal15} (and derived from \cite{Milch06}), a vertex or an edge is {\em irredundant}
if it appears in a simple $st$-path, and {\em redundant} otherwise. 
The characterisation given by \cite{CinesiFull} states that an irredundant net is vulnerable if and only if it is series-parallel
(the definition is reported in Appendix~\ref{app:braess} and it can be checked in linear time \cite{VTL82}).
However, verifying if a net is irredundant is an \NP-complete problem, as shown in \cite{CGS19}.
In \cite{CGS18} the authors give a second characterisation of vulnerability that holds for generic (i.e., not necessarily irredundant) graphs, based on the embedding of $\W$ (Fig.~\ref{fig:Wheat}(3)) within the given net.
\vspace{-1mm}    

\begin{definition}[\cite{LR80}] 
	A {\em subgraph homeomorphism} from $H$ to $G$
	is a pair of injective mappings $(\phi,\psi)$, from $V_H$ to $V_G$ and
	from $E_H$ to the simple paths of $G$ respectively, 
	such that $\psi(e) = \phi(x) \leadsto \phi(y)$, for every $e \in  E_H$ that relates $(x,y)$.
	A homeomorphism is called {\em node disjoint} if all paths in the image of $\psi$ are pairwise 
	internally node disjoint.
\vspace{-1mm}    
\end{definition}

\begin{definition}
	\label{def:stEmbedding}
	An {\em st-embedding} of $\W$ from Fig.~\ref{fig:Wheat}(3) into a net $(G,s',t')$
	is a node disjoint subgraph homeomorphism $(\phi,\psi)$ from $\W$ to $G$ such that
	there are (possibly empty) node disjoint simple paths from $s'$ to $\phi(s)$ and from $\phi(t)$ to $t'$ that
	are pairwise internally node disjoint with all paths in the image of $\psi$.       
\end{definition}
\section{A High-level Decremental Algorithm}
\label{sec:decAlg}  

The high-level structure of the algorithm given in \cite{CGS19} 
is summarized 
in Algorithm~\ref{alg:cgs}.
It consists in iteratively choosing a cycle $\C$ in the graph $G$ (line \ref{line:findCycle}) and analyzing that cycle (line \ref{line:cycleAnalyis}): this can lead to find a $st$-embedding of the Wheatstone graph $\mathcal{W}$, that implies that $G$ is vulnerable (line \ref{line:findW}), or to break that cycle, by removing 
an edge $e$
redundant in $\C$. In this latter case,  function {\sl rmvRedEdges} is called
(line \ref{line:removing}) to identify the set of all edges that are no more $st$-connected in the graph after removing $e$; these edges now all become redundant and are removed from $G$.  
If there is no embedding of $\mathcal{W}$ involving (parts of) a cycle in the original graph, the {\bf while}-loop stops in at most $\mathcal{O}(m)$ steps; at that moment, $G$ is an acyclic graph and its vulnerability is equivalent to not be series-parallel \cite{CinesiFull} (this can be checked in linear time \cite{VTL82}).
The complexity of this algorithm is $f(n,m)$ $\in\mathcal{O}(m\cdot g(n,m))$, being $g(n,m)$ the complexity of the function {\sl Search$\mathcal{W}$}: indeed, both {\sl FindACycle} and {\sl rmvRedEdges} are linear in $m$, being variations of some visit; moreover,
we expect that $g(n,m)\in\Omega(m)$.

Any algorithm following the same ``hunting $\mathcal{W}$ in cycles'' strategy as function {\sl isVuln} can easily be transformed into a decremental procedure: just invoke it on the graph $G\setminus\{e\}$, being $e$ the edge to be removed. 
However, to carry out our amortized complexity analysis, we found it convenient to represent a sequence of edge removals as an algorithm (Algorithm \ref{alg:dec}), where we assume to know in advance all the edges that will be removed and we specify how to use {\sl isVuln} in this context. 
In practice, edges will disappear in different moments of the net's life and are not available in advance: Algorithm \ref{alg:dec} shows how to exploit the current status of the net ({\sc Vuln/NotVuln}) and the edges identified as redundant in previous executions (stored in $R$) to optimize the calls to {\sl isVuln} and, consequently, reduce the amortized cost.

More precisely, in the decremental setting, we start from a net with $m$ edges and sequentially remove $k\in\mathcal{O}(m)$ edges $e_1, \ldots, e_k$. At any edge removal, if the graph is vulnerable, we call function {\sl IsVuln} on a graph $\bar{G}$ in which we delete the edge to be removed and the set $R$ of all edges already discovered redundant in  previous executions of {\sl IsVuln} (line~\ref{line:seqGbar}--\ref{line:seqIsVuln})  
and we keep track of new  redundant edges discovered 
(line~\ref{line:seqRR'}).  In fact, removing an edge $e_j$ eliminates some $st$-paths in $G$ and therefore all redundant edges in $G$ will remain redundant in $G\setminus\{e_j\}$.
 Thus, during a sequence of edge removals, the set of redundant edges can only grow.
 In contrast, if the net is not vulnerable, thanks to the characterization in \cite{CGS18}, we know that it will remain so after any further edge removal; thus, we can remove edges in constant time, in particular without invoking {\sl IsVuln}.
 \vspace*{-.4cm}
\begin{center}
\begin{tabular}{cc}
\begin{minipage}{0.46\linewidth}
\begin{algorithm}[H]
    \caption{{\sl IsVuln}$(G)$}
    \begin{algorithmic}[1]
        \State $R'$ = $\emptyset$ 
        \While{$G$ is not acyclic}
        \State $\C$ = {\sl FindACycle}$(G)$ \label{line:findCycle}
        \State {\sl found}, $e$ = {\sl Search${\cal W}$}$(G,\C)$ \label{line:cycleAnalyis}
        \If{{\sl found}}
        \State \Return {$R'$, {\sc Vuln}}
        \label{line:findW}
        \Else \State $R'$ = $R'\, \cup $ {\sl rmvRedEdges}$(G,e)$
        \label{line:removing}
        \EndIf
        \EndWhile
        \If{$G$ is series-parallel}
        \State \Return {$\emptyset$, \sc NotVuln}
        \Else \ \Return {$R'$, \sc Vuln}
        \EndIf
    \end{algorithmic}
    \label{alg:cgs}
\end{algorithm}
\end{minipage}
&
\begin{minipage}{0.47\linewidth}
\begin{algorithm}[H]
    \caption{{\sl seqRmv}$(G, e_1, \ldots, e_k)$}
    \begin{algorithmic}[1]
    \State $R$ = $\varnothing$
    \State {\sl status} = {\sc Vuln}
    \For{$j=1$ {\bf to} $k$}
        \If{$e_j\notin R$ and {\sl status} = {\sc Vuln}}
            \State  $\bar{G}=(V, E\setminus (R \cup e_j))$  
        \label{line:seqGbar}
            \State $R'$, {\sl status} = {\sl IsVuln}$(\bar{G})$
        \label{line:seqIsVuln}

            \State{$R=R\cup R'$}
        \label{line:seqRR'}
        \EndIf
        \State{$E = E \setminus \{e_j\}$}
    \EndFor
    \State \Return {\sl status}
    \end{algorithmic}
    \label{alg:dec}
\end{algorithm}
\end{minipage}
\vspace{5mm}
\end{tabular}
\end{center}

If the net is vulnerable, the function {\sl IsVuln}  
has analyzed some cycles before finding a $st$-embedding of $\mathcal{W}$ and has marked a set $R$ of edges as redundant; these can 
be ignored in subsequent calls of {\sl IsVuln} because removing edges never makes irredundant a redundant edge. 
 %
This implies that a sequence of runs of {\sl IsVuln} cannot remove more edges than {\sl IsVuln} does in its worst case scenario, i.e. whenever the algorithm has to break all the existing cycles, with a cost of $\mathcal{O}(m\cdot f(n, m))$.
In other words, $k$ runs of {\sl IsVuln} as a decremental algorithm after $k$ edge removals (supported by storing in $R$ all the redundant edges found) has a total worst case complexity bounded by the very same worst case complexity of {\sl IsVuln} itself. 
The total cost for checking vulnerability after removing $k\in\mathcal{O}(m)$ edges is then $C(m)\in\mathcal{O}(m\cdot f(n,m))$.
Consequently, the amortized cost for the single iteration is $\frac{C(m)}{k}\in\mathcal{O}(f(n,m))$, 
being $k\in\mathcal{O}(m)$.


 In \cite{CGS19}, the algorithm for {\sl Search$\mathcal{W}$} consists of solving a constant number of reachability problems. However, it may happen that, with the chosen cycle, it is not able to either find a $st$-embedding of ${\cal W}$ nor to break that cycle by canceling some redundant edge(s); in that case, a new cycle must be chosen. Following a suitable order on cycles, this situation can occur at most $\mathcal{O}(n)$ times. As a consequence, the complexity of {\sl Search$\mathcal{W}$} in \cite{CGS19} is $\mathcal{O}(nm)$ and the complexity of the overall algorithm is $\mathcal{O}(nm^2)$.
In the rest of this paper, we show that we can refine the cycle analysis presented in \cite{CGS19} and make it always conclusive (i.e. it always either finds ${\cal W}$ or breaks a cycle) in time $\mathcal{O}(m)$, by solving a constant number of reachability problems.
This leads to a function {\sl Search$\mathcal{W}$} of complexity $\mathcal{O}(m)$ (see Algorithm~\ref{alg:CycleAnalysis} in Section~\ref{sec:amortized}), which brings the complexity of {\sl isVuln} to $\mathcal{O}(m^2)$ and allows for a decremental algorithm of amortized cost of $\mathcal{O}(m)$. 


\section{Graph-Theoretic Conditions for Vulnerability}
\label{sec:chars}

The theoretical heart of this paper is the development of 7 sufficient graph-theoretic conditions for establishing vulnerability of a net analyzing a cycle, with the feature that, if all the checks fail, we identify at least an edge that is redundant in that cycle (see Theorem~\ref{th:Main}). To this aim, we start by giving some preliminary definitions and name conventions.


\vspace{-2mm}
\subsection{Preliminary Definitions and Notations}\label{Definitions}

\begin{figure}[t]
\centering
\begin{minipage}{1\textwidth}
\centering
\begin{tikzpicture}[every node/.style={circle,draw, minimum size=4mm, inner sep=1pt}]
\node (s) at (0,0) {$s$};
\node (es) at (1,0) {$\e^*$};
\node (x1) at (2,0) {$\x_1$};
\node (xi) at (3,0) {$\x_2$};
\node (xt) at (4,0) {$\hat \x$};
\node (a) at (5,0) {};
\node (tp) at (6,0) {$t'$};
\node (t) at (7,0) {$t$};
\node (b) at (3,1) {};

\draw [->, orange] (s) -- node[above,draw=none, text=orange,] {$P^*$} (es);
\draw[->,red] (es)to node[above,draw=none, text=red] {$\C$}(x1);
\draw[->,red] (x1)to node[above,draw=none, text=red] {$\C$}(xi);
\draw [->,bend right] (x1) to (a);
\draw [->,bend right] (x1) to (tp);
\draw [->,bend right] (xi) to (a);
\draw[->, red] (xi)to node[above,draw=none, text=red] {$\C$}(xt);
\draw[->,blue] (xt)-- node[above,draw=none, text=blue] {$\hat P$}(a);
\draw [->,red,bend right =30] (xt) to node[above,draw=none, text=red] {$\hat e$}(b);
\draw [->,red,bend right =30] (b) to node[above,draw=none, text=red] {$\C$}(es);
\draw[->,blue] (a)-- node[above,draw=none, text=blue] {$\hat P$}(tp);
\draw[->,blue] (tp)-- node[above,draw=none, text=blue] {$\hat P$}(t);
\end{tikzpicture}
\end{minipage}
\begin{minipage}{1\textwidth}

\begin{tikzpicture}[scale=0.8,every node/.style={circle,draw, minimum size=4mm, inner sep=0.5pt}]
\node (s) at (0,0) {$s$};
\node (a) at (0.6,0) {};
\node (b) at (1.2,0) {};
\node (c) at (2,1) {};
\node (e1) at (3,1) {$\e_1$};
\node (d) at (3,-1) {};
\node (e2) at (3.8,0) {$\e_3$};
\node (e) at (5,-1) {};
\node (f) at (8,-1) {};
\node (g) at (8,1) {};
\node (x1b) at (5.5,0) {$x_1$};
\node (x2) at (8,0) {$x_2$};

\node (x3) at (9,0) {$x_3$};

\node (es) at (2.3,0) {$\e^*/\e_2$};
\node (x1) at (4.6,0) {$\x_1$};
\node (xt) at (7,0) {$\hat \x$};

\node (o1) at (11,0) {$\omega_1$};
\node (tp) at (10,0) {$t'$};
\node (o2) at (12,0) {$\omega_2$};
\node (t) at (14,0) {$t$};

\draw [->,orange] (s) -- (a);
\draw[->,orange] (a) -- (b);
\draw[->,orange] (b) -- (es);
\draw [->,red] (es) -- (e2);
\draw[->,red] (e2) -- (x1);
\draw[->,red] (x1) -- (x1b);
\draw [->,red] (x1b) -- (xt);
\draw[->,blue] (xt) -- (x2);
\draw [->,blue] (x2) -- (x3);
\draw[->, blue] (x3) -- (tp);
\draw[->, blue] (tp) -- (o1);
\draw[->, blue] (o1) -- (o2);
\draw[->, blue] (o2) -- (t);

\draw [->,bend left] (b) to (c);
\draw [->,bend left] (c) to (g);
\draw [->,bend left] (g) to (o2);
\draw [->] (c) to (e1);
\draw [->,bend left=20] (g) to (x3);
\draw [->,red, bend right] (xt) to (e1);
\draw [->,red, bend right] (e1) to (es);
\draw [->,bend right] (a) to (d);
\draw [->] (d) to (e2);
\draw [->] (d) to (e);
\draw [->] (e) to (x1b);
\draw [->] (e) to (f);
\draw [->] (f) to (x2);
\draw [->,bend right] (f) to (o1);
\draw [->,bend right=20, dotted] (x1) to (tp);

\end{tikzpicture}
\end{minipage}
\centering
\begin{minipage}{1\textwidth}

\begin{tikzpicture}[scale=0.8,every node/.style={circle,draw, minimum size=4mm, inner sep=1pt}]
\node (s) at (0,0) {$s$};
\node (a) at (8,1) {};
\node (g1) at (8,0) {$\gamma_1$};

\node (g2) at (9,0) {$\gamma_2$};

\node (es) at (2,0) {$\e^*$};
\node (x1) at (3,0) {$\x_1$};
\node (xi) at (4,0) {$\x_2$};
\node (xj) at (5,0) {$\x_3$};
\node (xk) at (6,0) {$\x_4$};
\node (xt) at (7,0) {$\hat \x$};

\node (y1) at (12,0) {$y_1$};
\node (tp) at (10.5,0) {$t'/\gamma_3$};
\node (y2) at (13,0) {$y_2$};
\node (t) at (14,0) {$t$}; 

\draw [->,orange] (s) -- (es);
\draw [->,red] (es) -- (x1);
\draw[->,red] (x1) -- (xi);
\draw [->,red] (xi) -- (xj);
\draw[->,red] (xj) -- (xk);
\draw[->,red] (xk) -- (xt);
\draw[->,blue] (xt) -- (g1);
\draw [->,blue] (g1) -- (g2);
\draw[->, blue] (g2) -- (tp);
\draw[->, blue] (tp) -- (y1);
\draw[->, blue] (y1) -- (y2);
\draw[->, blue] (y2) -- (t);
\draw[->,red,bend right=45] (xt) to (es);
\draw[->,bend left=20] (xi) to (a);
\draw[->,bend left=20] (a) to (y1);
\draw[->,bend left=20] (a) to (g2);
\draw[->,bend right=20] (x1) to (tp);
\draw[->,bend right=30] (xk) to (y2);
\draw[->,bend left=30] (xj) to (g1);

\end{tikzpicture}
\vspace{-2mm}    
\end{minipage}
\caption{Pictorial representation of the conventions used for node names and path colors. To make this figure (and the following ones) more readable, all arrows apart from $\hat e$ are intended to denote paths (i.e., we use $\rightarrow$ instead of $\pat$).}
\label{ex:prel}
\vspace{-3mm}    
\end{figure}

In the rest of this paper, we shall use the following conventions on node names and on path coloring within figures:
\begin{itemize}
    \item $\e^*$ denotes a node at minimum distance from $s$ that belongs to a non-trivial cycle $\C$; edges of $\C$ will always be depicted in red (see Fig.~\ref{ex:prel} up);
    \item $P^*$ is any shortest path from $s$ to $\e^*$ and it will always be depicted in orange (see Fig.~\ref{ex:prel} up).
\end{itemize}
Once $\e^*$, $\C$ and $P^*$ are chosen, we accordingly define:
\begin{itemize}    
    \item $\Xi$ is the set of nodes from which it is possible to exit $\C$, i.e. is the set of $\x\in \Xi$ such that $\exists P=\x\pat t$ and $\mathring P\cap \C=\emptyset$;
    we number the elements of $\Xi$ as $\x_1, \x_2, \ldots$ by following $[\e^*\overset{\C}{\pat}\e^*)$ and call $\hat \x$ the last one (see Fig.~\ref{ex:prel} up);
    \item $\hat e$ is the edge of $\C$ that starts from $\hat\x$;
    \item $\hat P$ is one of the paths $\hat\x \pat t$ that does not touch $\C$; it will always be denoted in blue (see Fig.~\ref{ex:prel} up);
    \item $t'$ is the node on $\hat P$ closer to $t$ that is reachable from $\x_1$ without passing through other nodes of $\hat P$ and $\C$ (see Fig.~\ref{ex:prel} up);
    \item $\E$ is the set of entry nodes along $(\hat\x\overset{\C}{\pat}\x_1]$, i.e. those nodes $\e$ reachable from $s$ without touching $\hat P$ nor passing through $\C$; we number the elements of $\E$ as $\e_1, \e_2, \ldots$ by following $(\hat\x\overset{\C}{\pat}\x_1]$ (see Fig.~\ref{ex:prel} center); notice that $\e^*$ belongs to this order as well (in Fig.~\ref{ex:prel} center it is $\e_2$);
    \item $X$ is the set of nodes of $(\x_1\overset{\C}{\pat}\hat{\x}\overset{\hat P}{\pat}t')$ reachable from $s$ without passing through other nodes of $\C$ and $\hat P$; we number the elements of $X$ as $x_1, x_2, \ldots$ by following $(\x_1\overset{\C}{\pat}\hat{\x}\overset{\hat P}{\pat}t')$ (see Fig.~\ref{ex:prel} center);
    \item $\Omega$ is the set of nodes of $[t'\overset{\hat P}{\pat}t]$ reachable from $s$ without passing through other nodes of $\hat P$;
    we number the elements of $\Omega$ as $\omega_1, \omega_2, \ldots$ by following $[t'\overset{\hat P}{\pat}t]$ (see Fig.~\ref{ex:prel} center);
    \item $Y$ is the set of nodes of $(t'\overset{\hat P}{\pat}t]$ reachable from $\x_1$ without passing through other nodes of $\hat P$;
    we number the elements of $Y$ as $y_1,y_2, \ldots$ by following $(t'\overset{\hat P}{\pat}t]$ (see Fig.~\ref{ex:prel} bottom);
    \item $\Gamma$ is the set of nodes of $(\hat\x\overset{\hat P}{\pat}t']$ reachable from $\x_1$ without passing through other nodes of $\hat P$;
    we number the elements of $\Gamma$ as $\gamma_1, \gamma_2, \ldots$ by following $(\hat\x\overset{\hat P}{\pat}t']$ (see Fig.~\ref{ex:prel} bottom).
\end{itemize}

\begin{remark}
\label{rem:oneExit}
Notice that, if the cycle $\C$ has only one exit $\xi$, then $\hat e$ is necessarily redundant since, starting from $\x$, there is no simple path passing through $\hat e$ that exits $\C$ without passing through $\x$. Thus, in the following we assume that there are at least two exits (viz., $\x_1$ and $\hat\x$). 
\end{remark}

\vspace{-2mm}    

\subsection{Sufficient Conditions for Vulnerability}
\vspace{-1mm}    

Assume that we fixed $\e^*$, $\C$ and $P^*$ as described in the previous section.
We are now ready to provide the 7 conditions and prove that they all imply vulnerability of the net; in the next section, we shall provide algorithms for checking each of them in time that is linear in the number of edges of the net.

\begin{definition}
Each  condition is a property of the underlying graph $G$:   
\begin{itemize}
    \item $A_1(G) \triangleq$ $G$ has at least one $x_i$
    \item $A_2(G) \triangleq$ $\exists \xi_i,\omega_j$ s.t. $\exists P_i=\x_i\pat t$ in $G\setminus\{\omega_j\}$ 
    \item $A_3(G) \triangleq$ $\e_1\in(\hat\x\overset{\C}{\pat}\e^*)$ and $\exists a\in(\e^*\overset{\C}{\pat}\hat\x]$ s.t. $\exists P=\e_1\pat a$ in $G\setminus\{\e^*\}$
    \item $A_4(G) \triangleq$ $G$ has at least one $y_i$
    \item $A_5(G) \triangleq$ $\exists \gamma_i$ $\exists P=\hat\x\pat t$ s.t. $\mathring P \cap ( \C \cup \{\gamma_i\}) = \emptyset$ 
    \item $A_6(G) \triangleq$ $\exists\x_i,\x_j$ with $i<j$ and
    $\exists \gamma,\gamma'\in (\hat\x\overset{\hat P}{\pat}t]$ such that $\exists P_i=\x_i\pat \gamma$ $\exists  P_j=\x_j\pat \gamma'$ s.t. $\mathring{P_i}\cap(\C\cup\hat P)=\emptyset$ and $\mathring{P_j}\cap(\C\cup\hat P)=\emptyset$,
    for which it holds that
either {\bf (I)} $\gamma <_{\hat P} \gamma'$ 
or {\bf (II)} $\mathring{P_i}\cap\mathring{P_j}\neq\emptyset$
    \item $A_7(G) \triangleq$ $\exists \x_i$ s.t. $\exists P=\e^*\pat\hat\x$ in $G\setminus\{\x_i\}\setminus (\hat\x\overset{\hat P}{\pat}t]$
\end{itemize}
\end{definition}

\begin{proposition}\label{prop:noxi}
If $A_1(G)$ then $G$ is vulnerable.
\end{proposition}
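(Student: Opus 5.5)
We need to show: if there is a node $x_i\in X$, i.e. a node on $(\x_1\overset{\C}{\pat}\hat\x\overset{\hat P}{\pat}t')$ reachable from $s$ by a path $Q$ touching no other node of $\C$ or $\hat P$, then $G$ has an $st$-embedding of $\W$. By the characterization of \cite{CGS18}, this suffices. The idea is to map $\W$'s nodes as $\phi(s)=\e^*$, $\phi(u)=\x_1$, $\phi(v)=x_i$, $\phi(t)=t'$.

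We build the four paths. The edge $s\edge u$ is realized by $\e^*\overset{\C}{\pat}\x_1$. The edge $u\edge t$ is realized by the exit path from $\x_1$ to $t'$ given by the definition of $t'$, which avoids $\C$ and $\hat P$ internally. The edge $u\edge v$ is realized by following $\x_1\overset{\C}{\pat}\hat\x\overset{\hat P}{\pat}x_i$ (if $x_i$ lies on $\C$, we stop on $\C$). The edge $v\edge t$ is realized by $x_i\overset{\hat P}{\pat}t'$, continuing along $\C$ to $\hat\x$ first if $x_i\in\C$.

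For the edge $s\edge v$, we need a path from $\e^*$ to $x_i$. We use $Q$ instead: take $Q$ from $s$. Let $z$ be the last node of $Q$ shared with $P^*$. Then $\phi(s):=z$ is an option, with the prefix $s\overset{P^*}{\pat}z$ as the $s$-connector, $z\overset{P^*}{\pat}\e^*\overset{\C}{\pat}\x_1$ for $s\edge u$, and $z\overset{Q}{\pat}x_i$ for $s\edge v$. The $t$-connector is $t'\overset{\hat P}{\pat}t$.

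Disjointness follows from the defining properties. $Q$ avoids $\C$ and $\hat P$ except at $x_i$. The $\x_1\pat t'$ path avoids $\C\cup\hat P$ internally. $P^*$ meets $\C$ only at $\e^*$ by minimality of distance. $P^*$ avoids $\hat P$, since otherwise $\hat\x$'s cycle could be reached more closely — this needs checking, because an exit to $t$ crossing $P^*$ would give $s\pat t$ but not a closer cycle node. If it fails, we reroute by cutting at the first intersection.

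The main obstacle is the interaction between $Q$ and the $\x_1\pat t'$ exit path $R$, which are not a priori disjoint. If $Q$ meets $R$ at some node $w$, we would take $w$ as a new branching point. Then $s\pat w$ continues to $t'$ directly along $R$, and $w\pat x_i$ follows $Q$. This yields a $\W$ with $\phi(u)=w$. If that fails, we would choose $\phi(u)=\x_1$ and $\phi(v)=w$, using the minimality of $t'$ (the node closest to $t$). Carrying out this case analysis cleanly, together with the analogous intersection of $Q$ with $P^*$, is where most of the work lies.
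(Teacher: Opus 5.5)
Your set-up and the disjoint case agree with the paper's proof. You take $\phi(s)=z$ (the last node of $Q$ on $P^*$, the paper's $s'$), $\phi(u)=\x_1$, $\phi(v)=x_i$, $\phi(t)=t'$, with exactly the paths and connectors you list. The gap is the case where $z\overset{Q}{\pat}x_i$ meets the exit path $R=\x_1\pat t'$. This is the only delicate part of the proposition, and you leave it open.

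Your first option, $\phi(u)=w$, fails for a careless choice of $w$. Suppose $w$ is the first node of $Q$ on $R$, and $Q$ later re-enters $R$ at some $w'$ with $w<_R w'$. Then the $u\to v$ path $w\overset{Q}{\pat}x_i$ and the $u\to t$ path $w\overset{R}{\pat}t'$ share the internal node $w'$. You also never say what realizes $s\to v$ in that option. What is missing is a specific choice of intersection point together with a re-routing of one edge of $\W$.

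The paper takes $a$ to be the \emph{first} node of $z\overset{Q}{\pat}x_i$ lying on $R$. Then $a\neq x_i$, since $x_i\notin R$. It sets $\phi(u)=\x_1$, $\phi(v)=a$, $\phi(t)=t'$, with the following paths:
\begin{itemize}
\item $s\to u$: $z\overset{P^*}{\pat}\e^*\overset{\C}{\pat}\x_1$;
\item $s\to v$: $z\overset{Q}{\pat}a$;
\item $u\to v$: $\x_1\overset{R}{\pat}a$;
\item $v\to t$: $a\overset{R}{\pat}t'$;
\item $u\to t$, rerouted: $\x_1\overset{\C}{\pat}\hat\x\overset{\hat P}{\pat}t'$.
\end{itemize}
Nothing changes when $x_i\in\hat P$, since $x_i$ is no longer used. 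Disjointness then follows from $a$ being the first hit of $R$ along $Q$ and from $R$ avoiding $\C\cup\hat P$ internally. No extremality property of $t'$ is needed, contrary to your remark.

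Your first option can also be rescued. Take $w$ to be the node of $(z\overset{Q}{\pat}x_i)\cap R$ closest to $t'$ along $R$. Route $s\to v$ as $z\overset{P^*}{\pat}\e^*\overset{\C}{\pat}x_i$, continuing along $\C$ to $\hat\x$ and then along $\hat P$ when $x_i\in\hat P$. Either way, some such choice has to be made and checked.

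Second, you flag the disjointness of $P^*$ from $\hat P$ as needing a check and possibly rerouting. You also use the disjointness of $P^*$ from $R$ implicitly. Both hold outright by minimality of $\e^*$, which is what the paper invokes. Suppose $\hat P$ (resp.\ $R$) met $P^*$ in a node $p\neq\e^*$. Then $p$ is reachable from $\hat\x$ (resp.\ $\x_1$) on $\C$ and reaches $\e^*$ along $P^*$. So $p$ lies in the non-trivial strongly connected component of $\C$, hence on a cycle, while being strictly closer to $s$ than $\e^*$. Your worry that such a crossing would only yield an $s\pat t$ path is therefore unfounded, and no rerouting is required.
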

\begin{proof}
If $G$ has at least one $x_i$, it means that there exist a path $P=s\pat x_i$ s.t. $\mathring{P}\cap(\C\cup\hat P)=\emptyset$. 
Moreover $\x_1\notin P$ and $x_i<t'$ on $(\x_1\overset{\C}{\pat}\hat\x\overset{\hat P}{\pat}t')$. 
Let $s'$ be the last intersection along $P$ with $P^*$, and call $P'=s'\overset{P}{\pat} x_i$. 
Let $P_1=\x_1\pat t'$; by minimality of $\e^*$, we have that $P_1\cap P^*=\emptyset$. 
Let $a$ be the first intersection along $P'$ with $P_1$, if any. 

If $x_i\in(\x_1\overset{\C}{\pat}\hat\x]$, then we have two cases:
\begin{enumerate}
\item If $P_1$ and $P'$ are disjoint, we are in the situation depicted to the left and the $st$-embedding of $\W$, needed to prove vulnerability of $G$, can be found as depicted to the right (notice that the paths are all disjoint):

\begin{tikzpicture}[scale=0.93,every node/.style={circle,draw, minimum size=4mm, inner sep=1pt}]
\node (s) at (0,0) {$s$};
\node (sp) at (1,0) {$s'$};
\node (es) at (2,0) {$\e^*$};
\node (x1) at (3,0) {$\x_1$};
\node (xi) at (4,0) {$x_i$};
\node (xt) at (5,0) {$\hat \x$};
\node (tp) at (6,0) {$t'$};
\node (t) at (7,0) {$t$};

\draw [->, orange] (s) -- (sp);
\draw [->, orange] (sp) -- (es);
\draw[->,red] (es) to (x1);
\draw[->,red] (x1) to (xi);
\draw[->, red] (xi) to (xt);
\draw[->,blue] (xt) -- (tp);
\draw [->,red,bend right =45] (xt) to (es);
\draw[->,blue] (tp) -- (t);
\draw [->,green,bend right =45] (sp) to node[above,draw=none, text=green]{$P'$}(xi);
\draw [->,gray,bend left =45] (x1) to node[above,draw=none, text=gray]{$P_1$}(tp);
\begin{scope}[shift={(8,0)}]
\node (s) at (0,0) {$s$};
\node (sp) at (1,0) {$s'$};
\node (es) at (1.5,0.5) {$\e^*$};
\node (x1) at (2,1) {$\x_1$};
\node (xi) at (2,-1) {$x_i$};
\node (xt) at (2.5,-0.5) {$\hat \x$};
\node (tp) at (3,0) {$t'$};
\node (t) at (4,0) {$t$};

\draw [->, orange] (s) -- (sp);
\draw [->, orange] (sp) -- (es);
\draw[->,red] (es) to (x1);
\draw[->,red] (x1) to (xi);
\draw[->, red] (xi) to (xt);
\draw[->,blue] (xt) -- (tp);
\draw[->,blue] (tp) -- (t);
\draw [->,green] (sp) to (xi);
\draw [->,gray] (x1) to (tp);
\end{scope}
\end{tikzpicture}

\item If $P_1$ and $P'$ intersect (and so the) $a$ defined above exists), then:

\begin{tikzpicture}[scale=0.93,every node/.style={circle,draw, minimum size=4mm, inner sep=1pt}]
\node (s) at (0,0) {$s$};
\node (sp) at (1,0) {$s'$};
\node (es) at (2,0) {$\e^*$};
\node (x1) at (3,0) {$\x_1$};
\node (a) at (4,-0.8){$a$};
\node (xi) at (4,0) {$x_i$};
\node (xt) at (5,0) {$\hat \x$};
\node (tp) at (6,0) {$t'$};
\node (t) at (7,0) {$t$};

\draw [->, orange] (s) -- (sp);
\draw [->, orange] (sp) -- (es);
\draw[->,red] (es) to (x1);
\draw[->,red] (x1) to (xi);
\draw[->, red] (xi) to (xt);
\draw[->,blue] (xt) -- (tp);
\draw [->,red,bend right =45] (xt) to (es);
\draw[->,blue] (tp) -- (t);
\draw [->,green,bend right =45] (sp) to node[above,draw=none, text=green] {$P'$} (a);
\draw [->,gray] (x1) to node[below,draw=none, text=gray] {$P_1$}(a);
\draw [->,gray,bend right =45] (a) to node[below,draw=none, text=gray] {$P_1$}(tp);
\begin{scope}[shift={(8,0)}]
\node (s) at (0,0) {$s$};
\node (sp) at (1,0) {$s'$};
\node (es) at (1.5,0.5) {$\e^*$};
\node (x1) at (2,1) {$\x_1$};
\node (a) at (2,-1) {$a$};
\node (xt) at (2.5,0.5) {$\hat \x$};
\node (tp) at (3,0) {$t'$};
\node (t) at (4,0) {$t$};

\draw [->, orange] (s) -- (sp);
\draw [->, orange] (sp) -- (es);
\draw[->,red] (es) to (x1);
\draw[->,red] (x1) to (xt);
\draw[->,blue] (xt) -- (tp);
\draw[->,blue] (tp) -- (t);
\draw [->,green] (sp) to (a);
\draw [->,gray] (x1) to (a);
\draw [->,gray] (a) to (tp);
\end{scope}
\end{tikzpicture}
\end{enumerate}
Finally, if $x_i\in[\hat\x\overset{\hat P}{\pat}t')$ we work in the same way; the only difference is that $\hat\x$ now belongs to $\xi_1 \pat x_i$.
\end{proof}
\begin{proposition}\label{prop:onlydirectentry}
\begin{enumerate} 
\item If $A_2(G)$ then $G$ is vulnerable;
\item If $\lnot A_2(G)$ then in $G$ all simple paths $P=\omega_j\pat t$ do not touch $\C$.
\end{enumerate} 
\end{proposition}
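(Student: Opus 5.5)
The second item follows directly from the first part of the definition, so I will prove it first. Suppose $\lnot A_2(G)$ and let $P=\omega_j\pat t$ be a simple path that touches $\C$. Let $c$ be the \emph{last} node of $P$ lying on $\C$. The suffix $c\overset{P}{\pat}t$ has no internal node on $\C$, so by definition $c\in\Xi$, say $c=\x_i$. Since $P$ is simple and starts at $\omega_j$, the suffix $\x_i\overset{P}{\pat}t$ does not contain $\omega_j$ (we have $c\neq\omega_j$ because $\omega_j\in\hat P$ and $\hat P$ does not touch $\C$). Hence it is a path $\x_i\pat t$ in $G\setminus\{\omega_j\}$, witnessing $A_2(G)$, which is a contradiction.

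For the first item, I will build an $st$-embedding of $\W$, in the style of Proposition~\ref{prop:noxi}. Fix $\x_i$, $\omega_j$ and $P_i=\x_i\pat t$ in $G\setminus\{\omega_j\}$. We may shorten $P_i$ to start at the last node of $\C$ on it, so that $\mathring P_i\cap\C=\emptyset$; this still avoids $\omega_j$, possibly replacing $\x_i$ by another exit. Let $Q=s\pat\omega_j$ be the path witnessing $\omega_j\in\Omega$, so $\mathring Q\cap\hat P=\emptyset$. Let $s'$ be the last node of $Q$ on $P^*$, and put $Q'=s'\overset{Q}{\pat}\omega_j$; by minimality of $\e^*$, $Q'$ avoids $\C$.

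The intended mapping is as follows. The node $s$ of $\W$ goes to $s'$, and the $s\to u$ branch is $s'\overset{P^*}{\pat}\e^*\overset{\C}{\pat}\x_i$. The $s\to v$ branch is $Q'$ with $v\mapsto\omega_j$. The $v\to t$ branch is $\omega_j\overset{\hat P}{\pat}z$, where $z$ is the first node of $P_i$ on $(\omega_j\overset{\hat P}{\pat}t]$; $z$ becomes $\phi(t)$, and the tail $z\overset{\hat P}{\pat}t$ gives the exit path to $t$. The $u\to t$ branch is $\x_i\overset{P_i}{\pat}z$. The $u\to v$ branch goes from the cycle to $\omega_j$: either $\x_i\overset{\C}{\pat}\hat\x\overset{\hat P}{\pat}\omega_j$ or, when $\x_i$ precedes $\x_1$ in a suitable sense, $\x_1\pat t'\overset{\hat P}{\pat}\omega_j$ using the definition of $t'$. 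In the latter case the roles of $u$ are swapped between $\x_1$ and $\x_i$ along $\C$.

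The main obstacle is disjointness. $P_i$ may meet $Q'$, may meet $\hat P$ before $\omega_j$, or may meet the $\x_1\pat t'$ path. My plan is to resolve each collision by rerouting at the first intersection point, exactly as in case 2 of Proposition~\ref{prop:noxi}.

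If $P_i$ hits $Q'$ at $a$ first, I take $a$ as $\phi(v)$. The branches are then: $s'\pat a$ along $Q'$; $\x_i\pat a$ along $P_i$; $a\pat\omega_j\overset{\hat P}{\pat}t$ as the $v\to t$ branch; and $\hat P$ from $\hat\x$ onward (or $\x_1\pat t'$) as the $u\to t$ branch through $\omega_j$'s predecessor part.

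If $P_i$ first hits $\hat P$ at a node before $\omega_j$, I take that node as the vertex where the $u$-branches merge and $\omega_j$ as $\phi(t)$. The two incoming branches are then $Q'$ and the $\hat P$-segment, and the exit to $t$ is $\omega_j\overset{\hat P}{\pat}t$, which is disjoint from everything else.

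I expect the case analysis on the relative order of $\x_i$ versus $\x_1,\hat\x$ and of the intersection points along $\hat P$ to be the delicate step. The key invariants I will rely on are that $Q'$ avoids $\C\cup\hat P$, that $P^*$ avoids everything past $\e^*$, and that $P_i$ avoids $\omega_j$.
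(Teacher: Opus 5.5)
Your proof of item 2 is correct and is essentially the paper's argument; taking the last node of $P$ on $\C$ as the exit is even a little more direct than the paper's detour through $\x_1\overset{\C}{\pat}\x_i$. Shortening $P_i$ so that $\mathring P_i\cap\C=\emptyset$ is also a useful precaution. Item 1, however, has a genuine gap. Disjointness is the entire content of the proof, and you give only a plan, whose case descriptions do not produce images of $\W$.

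Your generic configuration, where $\x_i\overset{P_i}{\pat}z$ meets nothing else, is right; it is the paper's second picture with $\phi(u)=\x_i$. The collision cases are where it breaks:
\begin{itemize}
\item In the case \emph{$P_i$ hits $Q'$ first}, the branch $a\pat\omega_j\overset{\hat P}{\pat}t$ that you call $v\to t$ overlaps any $u\to t$ branch running along $\hat P$. You need $\phi(t)=\omega_j$, with $\omega_j\overset{\hat P}{\pat}t$ as the exit tail. Even then, $\phi(u)=\x_i$ only works if $\x_i\overset{P_i}{\pat}a$ has not already touched $\hat P$.
\item In the case \emph{$P_i$ first hits $\hat P$ at $g$ before $\omega_j$}, $Q'$ cannot be a branch entering $\phi(t)=\omega_j$, because it starts at $\phi(s)$. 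The right choice is $\phi(v)=\omega_j$. Then the $u\to t$ branch must be the continuation of $P_i$ after $g$, which you do not control: it may return to $(\hat\x\overset{\hat P}{\pat}\omega_j)$ on either side of $g$, possibly many times, or meet $Q'$. Rerouting at first intersections does not by itself end in a clean picture.
\item The alternative through $\x_1\pat t'$ is never needed, since by the numbering every exit lies on $[\x_1\overset{\C}{\pat}\hat\x]$.
\end{itemize}

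The missing idea, which is essentially what the paper does, is to combine a first and a last intersection. Let $a$ be the first node of $P_i$ on $Q'\cup(\omega_j\overset{\hat P}{\pat}t]$. Let $b$ be the last node of $\x_i\overset{P_i}{\pat}a$ on the chain $L=[\x_1\overset{\C}{\pat}\hat\x\overset{\hat P}{\pat}\omega_j)$. Put $\phi(s)=s'$ and $\phi(u)=b$, with $s\to u$ branch $s'\overset{P^*}{\pat}\e^*\overset{\C}{\pat}\x_1\overset{L}{\pat}b$. Then $b\overset{P_i}{\pat}a$ is internally disjoint from all other branches: from $L$ by the choice of $b$, from $Q'$ and the tail of $\hat P$ by the choice of $a$, from $\C$ by your shortening, and from $P^*$ by minimality of $\e^*$. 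Only two cases remain:
\begin{itemize}
\item if $a\in Q'$, set $\phi(v)=a$ and $\phi(t)=\omega_j$, with $u\to v$ given by $b\overset{P_i}{\pat}a$ and $u\to t$ by $b\overset{L}{\pat}\omega_j$;
\item if $a\in(\omega_j\overset{\hat P}{\pat}t]$, set $\phi(v)=\omega_j$ and $\phi(t)=a$, with $u\to v$ given by $b\overset{L}{\pat}\omega_j$ and $u\to t$ by $b\overset{P_i}{\pat}a$.
\end{itemize}
No analysis of the relative order of exits or intersection points is needed.

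Finally, \emph{$Q'$ avoids $\C$ by minimality of $\e^*$} is not a valid justification: a path from $s$ can enter $\C$ at a node farther from $s$ than $\e^*$. The paper takes $Q\cap\C=\emptyset$ as part of the witness for $\omega_j$, consistently with {\sl FindNodes}, where the nodes of $\C$ are sinks. What minimality of $\e^*$ does give you is that everything reachable from $\C$, in particular $P_i$, misses $P^*$ except possibly at $\e^*$.
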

\begin{proof}
For the first claim, if such an $\omega_j$ exists, then there is a $P=s\pat\omega_j$ s.t. $\mathring P\cap \hat P=\emptyset$ and $P\cap\C=\emptyset$. 
Let $s'$ be the last intersection between $P^*$ and $P$ along $P$. 
By minimality of $\e^*$, we have that $P_i\cap P^*=\emptyset$ and $(s' \overset P \pat \omega_j]\cap P^*=\emptyset$.

Let $a$ be the first intersection along $P_i$ with $P$, if any; otherwise, let $a$ be the first intersection after $\omega_j$ along $P_i$ with $\hat P$. 
In both cases, let $P_i'=\x_i\overset{P_i}{\pat} a$ and $b$ be the last intersection along $P_i'$ with $[\x_1\overset{\C}{\pat}\hat\x \overset{\hat P}{\pat}\omega_j)$;
let $P_i''=b\overset{P_i}{\pat} a$. 

If $P_i$ and $P$ are not disjoint, we are in the situation to the left\footnote{
In both the following figures, $\xi_i$ is not shown; for sure, it belongs to $[\xi_1 \overset \C \pat \hat\xi]$, but we do not know its position w.r.t. $b$ along $\C$.
}
 and we can find an $st$-embedding of $\W$ as depicted to the right (since the paths are all disjoint):

\hspace*{-.4cm}
\begin{tikzpicture}[scale=0.8,every node/.style={circle,draw, minimum size=4mm, inner sep=1pt}]
\node (s) at (0,0) {$s$};
\node (sp) at (1,0) {$s'$};
\node (es) at (2,0) {$\e^*$};
\node (x1) at (3,0) {$\x_1$};
\node (b) at (4,0) {$b$};
\node (xt) at (5,0) {$\hat \x$};
\node (tp) at (6,0) {$t'$};
\node (oi) at (7,0) {$\omega_j$};
\node (a) at (6,-1.5) {$a$};
\node (t) at (8,0) {$t$};

\draw [->, orange] (s) -- (sp);
\draw [->, orange] (sp) -- (es);
\draw[->,red] (es) to (x1);
\draw[->,red] (x1) to (b);
\draw[->, red] (b) to (xt);
\draw[->,blue] (xt) -- (tp);
\draw [->,red,bend right =45] (xt) to (es);
\draw[->,blue] (tp) -- (oi);
\draw[->,blue] (oi) -- (t);
\draw [->,green,bend right] (sp) to node[above,draw=none, text=green]{$P$}(a);
\draw [->,green,bend right =45] (a) to node[below,draw=none, text=green]{$P$}(oi);
\draw [->,gray,bend right =45] (b) to node[above,draw=none, text=gray]{$P_i''$}(a);

\begin{scope}[shift={(9.5,0)}]

\node (s) at (0,0) {$s$};
\node (sp) at (1,0) {$s'$};
\node (es) at (1.5,-0.5) {$\e^*$};
\node (b) at (2,-1) {$b$};
\node (xt) at (2.5,-0.5) {$\hat \x$};
\node (oi) at (3,0) {$\omega_j$};
\node (a) at (2,1) {$a$};
\node (t) at (4,0) {$t$};

\draw [->, orange] (s) -- (sp);
\draw [->, orange] (sp) -- (es);
\draw[->,red] (es) to (b);
\draw[->,red] (b) to (xt);
\draw[->,blue] (xt) -- (oi);
\draw[->,green] (a) -- (oi);
\draw[->,blue] (oi) -- (t);
\draw [->,green] (sp) to (a);
\draw [->,gray] (b) -- (a);
\end{scope}
\end{tikzpicture}

\noindent
Otherwise:

\hspace*{-.4cm}
\begin{tikzpicture}[scale=0.8,every node/.style={circle,draw, minimum size=4mm, inner sep=1pt}]
\node (s) at (0,0) {$s$};
\node (sp) at (1,0) {$s'$};
\node (es) at (2,0) {$\e^*$};
\node (x1) at (3,0) {$\x_1$};
\node (b) at (4,0) {$b$};
\node (xt) at (5,0) {$\hat \x$};
\node (tp) at (6,0) {$t'$};
\node (oi) at (7,0) {$\omega_j$};
\node (a) at (8,0) {$a$};
\node (t) at (9,0) {$t$};

\draw [->, orange] (s) -- (sp);
\draw [->, orange] (sp) -- (es);
\draw[->,red] (es) to (x1);
\draw[->,red] (x1) to (b);
\draw[->, red] (b) to (xt);
\draw[->,blue] (xt) -- (tp);
\draw [->,red,bend right =45] (xt) to (es);
\draw[->,blue] (tp) -- (oi);
\draw[->,blue] (oi) -- (a);
\draw[->,blue] (a) -- (t);
\draw [->,green,bend right =45] (sp) to node[above,draw=none, text=green]{$P$}(oi);
\draw [->,gray,bend left =45] (b) to node[above,draw=none, text=gray]{$P_i''$}(a);
\begin{scope}[shift={(10,0)}]
\node (s) at (0,0) {$s$};
\node (sp) at (1,0) {$s'$};
\node (es) at (1.5,-0.5) {$\e^*$};
\node (b) at (2,-1) {$b$};
\node (xt) at (2,0) {$\hat \x$};
\node (oi) at (2,1) {$\omega_j$};
\node (a) at (3,0) {$a$};
\node (t) at (4,0) {$t$};

\draw [->, orange] (s) -- (sp);
\draw [->, orange] (sp) -- (es);
\draw[->,red] (es) to (b);
\draw[->,red] (b) to (xt);
\draw[->,blue] (xt) -- (oi);
\draw[->,blue] (oi) -- (a);
\draw[->,blue] (a) -- (t);
\draw [->,green] (sp) to (oi);
\draw [->,gray] (b) -- (a);
\end{scope}
\end{tikzpicture}

We then prove the second claim. By contradiction, let $P=\omega_j\pat t$ be a simple path that touches $\C$. 
The only way to exit $\C$ is using some $\x_i$ and, since $P$ is simple, there exists some $Q=\x_i\pat t$ s.t. $\omega_j\notin Q$.
Since $\x_1\overset{\C}{\pat}\x_i$ always exists, this would imply $A_2(G)$, since there would be a path $\x_1\overset{\C}{\pat}\x_i\overset{Q}{\pat}t$ in $G\setminus\{\omega_j\}$.
\end{proof}

\begin{proposition}\label{prop:nobadchord}
If $A_3(G)$ then $G$ is vulnerable.
\end{proposition}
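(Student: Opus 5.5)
The plan is to build an $st$-embedding of $\W$ directly, as in Propositions~\ref{prop:noxi} and~\ref{prop:onlydirectentry}. The two ingredients are the entry path to $\e_1$ and the chord $P=\e_1\pat a$ that avoids $\e^*$.

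\textbf{Step 1: the entry path and the chord.} Since $\e_1\in\E$, there is a path $Q=s\pat\e_1$ that avoids $\hat P$ and does not pass through $\C$ before $\e_1$. Let $s'$ be the last node of $Q$ on $P^*$. By minimality of $\e^*$, the segment $(s'\overset{Q}{\pat}\e_1]$ avoids $P^*$. I would shorten $P$ so that its interior avoids $\C$: let $a'$ be the first node of $P$ after $\e_1$ lying on $(\e^*\overset{\C}{\pat}\e_1)$, if any.
\begin{itemize}
\item If $a'\in(\e^*\overset{\C}{\pat}\hat\x]$, replace $a$ with $a'$.
\item If $a'\in(\hat\x\overset{\C}{\pat}\e_1)$, I would restart from $a'$, iterating along $P$. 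Because $P$ eventually reaches $(\e^*\overset{\C}{\pat}\hat\x]$ and never touches $\e^*$, there is a last departure point $b\in[\e_1\overset{\C}{\pat}\e^*)$ with a subpath $b\pat a$ whose interior is disjoint from $\C$ and whose endpoint satisfies $a\in(\e^*\overset{\C}{\pat}\hat\x]$.
\end{itemize}
Then $b$ may replace $\e_1$, provided $b$ is still reachable from $s'$ disjointly. For that, I would route along $\C$ from $\e_1$ to $b$, which lies in $[\e_1\overset{\C}{\pat}\e^*)$ and so avoids $\e^*$.

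\textbf{Step 2: the embedding.} Set $\phi(s)=s'$, $\phi(u)=\e^*$, $\phi(v)=a$ and $\phi(t)=t$. The paths are:
\begin{itemize}
\item $s'\overset{P^*}{\pat}\e^*$;
\item $\e^*\overset{\C}{\pat}a$, which does not contain $\hat\x$ unless $a=\hat\x$;
\item $s'\overset{Q}{\pat}\e_1\overset{\C}{\pat}b\overset{P}{\pat}a$;
\item $a\overset{\C}{\pat}\hat\x\overset{\hat P}{\pat}t$;
\item for the edge $\e^*\to t$, an exit from $(\e^*\overset{\C}{\pat}a]$ if one exists. Otherwise I would swap the roles and use the portion $[\hat\x\overset{\C}{\pat}\e^*]$ together with $\hat P$.
\end{itemize}
Concretely, I would realize the edge $u\to t$ as $\e^*\overset{\C}{\pat}\x_1\overset{P_1}{\pat}t'\overset{\hat P}{\pat}t$ when $\x_1\in(\e^*\overset{\C}{\pat}a)$. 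In the other case I would take $v=\e^*$'s successor role differently: map $u\mapsto a$, $v\mapsto\hat\x$, with the middle edge $a\overset{\C}{\pat}\hat\x$.

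\textbf{Main obstacle.} The hardest part is the choice of exit paths to $t$ while keeping everything internally disjoint. This is especially delicate when $P_1$ or $Q$ intersects $\hat P$ or the chord, which forces a case split on the relative position of $\x_1$ and $a$ on $\C$ and on the first intersection of $P_1$ with $P\cup Q$. I would handle the intersection with $P\cup Q$ as in case~2 of Proposition~\ref{prop:noxi}: cut at the first intersection node and let it play the role of $\phi(v)$. I expect about three cases, each drawn as a picture in the style of the previous proofs.
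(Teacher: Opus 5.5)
There is a genuine gap in Step~2. Your assignment $\phi(u)=\e^*$, $\phi(v)=a$, $\phi(t)=t$ puts the chord inside the $s\to v$ path. The arc $\e^*\overset{\C}{\pat}a$ is already your $u\to v$ path, and your $v\to t$ path covers all of $\hat P$. So the $u\to t$ path must leave $\e^*$ off the cycle and reach $t$ avoiding $\hat P$. In effect you need a second exit at $\e^*$ itself, and $A_3(G)$ does not provide one.

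The concrete path you give, $\e^*\overset{\C}{\pat}\x_1\overset{P_1}{\pat}t'\overset{\hat P}{\pat}t$, is not admissible either. It shares the arc $\e^*\overset{\C}{\pat}\x_1$ with your $u\to v$ path and the segment $t'\overset{\hat P}{\pat}t$ with your $v\to t$ path. The sub-case $\x_1\in(\e^*\overset{\C}{\pat}a)$ could be patched by moving $\phi(u)$ to $\x_1$ and $\phi(t)$ to $t'$, but that is just Proposition~\ref{prop:noxi} again.

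The case $a\in(\e^*\overset{\C}{\pat}\x_1]$ is where $A_3$ has real content, and there your fallback $\phi(u)=a$, $\phi(v)=\hat\x$ fails. Take the edges $s\edge\e^*$, $s\edge w\edge\e_1$, the cycle $\e^*\edge a\edge\x_1\edge\hat\x\edge\e_1\edge\e^*$, the chord $\e_1\edge a$, and $\x_1\edge t$, $\hat\x\edge t$. Then $A_3(G)$ holds and no node of $[\e^*\overset{\C}{\pat}a]$ is an exit. Moreover, the only edge entering $\hat\x$ comes from $\x_1$, which is internal to your $u\to v$ path $a\overset{\C}{\pat}\hat\x$. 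So neither embedding you describe exists, although this net is vulnerable.

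Separately, your Step~1 claim is false in general: you assert that some $b\in[\e_1\overset{\C}{\pat}\e^*)$ reaches $a$ along $P$ with interior disjoint from all of $\C$. But the chord may leave $\C$ for the last time from a node of $(\hat\x\overset{\C}{\pat}\e_1)$.

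The missing idea is to use the arc of $\C$ leading \emph{back} to $\e^*$ as the bridge of $\W$, with the chord as a $u\to t$ path. Let $a$ be the first node of $P$ on $(\e^*\overset{\C}{\pat}\hat\x]$. If the chord meets $\hat P$ first, stop there instead and route through $\e^*\overset{\C}{\pat}\hat\x\overset{\hat P}{\pat}a$. Let $b$ be the last node of $P$ before $a$ on $[\e_1\overset{\C}{\pat}\e^*)$. Take $\phi(s)=s'$, $\phi(u)=b$, $\phi(v)=\e^*$, $\phi(t)=a$, with these paths:
\begin{itemize}
\item $s'\overset{Q}{\pat}\e_1\overset{\C}{\pat}b$ for $s\to u$;
\item $s'\overset{P^*}{\pat}\e^*$ for $s\to v$;
\item $b\overset{\C}{\pat}\e^*$ for the bridge $u\to v$;
\item $b\overset{P}{\pat}a$ for $u\to t$;
\item $\e^*\overset{\C}{\pat}a$ for $v\to t$;
\item tail $a\overset{\C}{\pat}\hat\x\overset{\hat P}{\pat}t$.
\end{itemize}

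This is the paper's embedding. Only the exit $\hat\x$ is used, so no case split on the position of $\x_1$ is needed. Contacts of $b\overset{P}{\pat}a$ with $(\hat\x\overset{\C}{\pat}\e_1)$ are harmless because that arc is not used.

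The only case split is whether $(b\overset{P}{\pat}a)$ meets $Q$. If so, let $c$ be the last such node along $P$ and set $\phi(u)=c$. Then $s'\overset{Q}{\pat}c$ is the $s\to u$ path, $c\overset{Q}{\pat}\e_1\overset{\C}{\pat}\e^*$ is the bridge, and $c\overset{P}{\pat}a$ is the $u\to t$ path. In the example above this gives $\phi(u)=\e_1$, $\phi(v)=\e^*$, $\phi(t)=a$.
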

\begin{proof}
Let $P_1=s\pat \e_1$ be such that $P_1\cap\hat P=\emptyset$ and $\mathring P_1\cap\C=\emptyset$ (one surely exists since $\e_1 \in \E$). 
Let $s'$ be the last node of $P^*$ along $P_1$ and $b\in\C$ be the last node of $P$ before $a$ on $[\e_1\overset{\C}{\pat}\e^*)$. 
Also, let $c$ be the last node of $P_1$ along $P$ after $b$ (and before $a$), if any. 

If $b\overset{P}{\pat}a$ and $s'\overset{P_1}{\pat}\e_1$ are disjoint, then we are in the situation depicted at left and we have the $st$-embedding of $\W$ depicted at right:

\hspace*{-.4cm}
\begin{tikzpicture}[scale=0.9,every node/.style={circle,draw, minimum size=4mm, inner sep=1pt}]
\node (s) at (0,0) {$s$};
\node (sp) at (1,0) {$s'$};
\node (es) at (2,0) {$\e^*$};
\node (b) at (4,1) {$b$};
\node (e1) at (5,1) {$\e_1$};
\node (a) at (5,0) {$a$};
\node (xt) at (6,0) {$\hat\x$};
\node (t) at (7,0) {$t$};

\draw [->, orange] (s) -- (sp);
\draw [->, orange] (sp) -- (es);
\draw[->,red] (es) to (a);
\draw[->,red] (a) to (xt);
\draw[->,blue] (xt) -- (t);
\draw [->,red,bend right =20] (xt) to (e1);
\draw [->,red] (e1) to (b);
\draw [->,red,bend right =30] (b) to (es);
\draw [->,green,bend left =45] (sp) to node[above,draw=none, text=green]{$P_1$} (e1);
\draw[->,gray] (b) to node[left,draw=none, text=gray]{$P$}(a);

\begin{scope}[shift={(8.3,0)}]
\node (s) at (0,0) {$s$};
\node (sp) at (1,0) {$s'$};
\node (es) at (2,-1) {$\e^*$};
\node (b) at (2,1) {$b$};
\node (e1) at (1.5,0.5) {$\e_1$};
\node (a) at (3,0) {$a$};
\node (t) at (4,0) {$t$};

\draw [->, orange] (s) -- (sp);
\draw [->, orange] (sp) -- (es);
\draw[->,red] (es) to (a);
\draw[->,gray] (b) to (a);
\draw[->,green] (sp) -- (e1);
\draw[->,red] (e1) -- (b);
\draw[->,blue] (a) -- (t);
\draw[->,red] (b) to (es);
\end{scope}
\end{tikzpicture}

\noindent
Otherwise:

\hspace*{-.4cm}
\begin{tikzpicture}[scale=0.9,every node/.style={circle,draw, minimum size=4mm, inner sep=1pt}]
\node (s) at (0,0) {$s$};
\node (sp) at (1,0) {$s'$};
\node (es) at (2,0) {$\e^*$};
\node (b) at (4,1) {$b$};
\node (c) at (4,2) {$c$};
\node (e1) at (5,1) {$\e_1$};
\node (a) at (5,0) {$a$};
\node (xt) at (6,0) {$\hat\x$};
\node (t) at (7,0) {$t$};

\draw [->, orange] (s) -- (sp);
\draw [->, orange] (sp) -- (es);
\draw[->,red] (es) to (a);
\draw[->,red] (a) to (xt);
\draw[->,blue] (xt) -- (t);
\draw [->,red,bend right =20] (xt) to (e1);
\draw [->,red] (e1) to (b);
\draw [->,red,bend right =30] (b) to (es);
\draw [->,green,bend left =45] (sp) to node[above,draw=none, text=green]{$P_1$}(c);
\draw [->,green,bend left =45] (c) to (e1);
\draw[->,gray, bend left=45] (b) -- (c);
\draw[->,gray] (c) to node[above,draw=none, text=gray]{$P$} (a);
\begin{scope}[shift={(8.2,0)}]
\node (s) at (0,0) {$s$};
\node (sp) at (1,0) {$s'$};
\node (es) at (2,-1) {$\e^*$};
\node (c) at (2,1) {$c$};
\node (e1) at (2,0) {$\e_1$};
\node (a) at (3,0) {$a$};
\node (t) at (4,0) {$t$};

\draw [->, orange] (s) -- (sp);
\draw [->, orange] (sp) -- (es);
\draw[->,red] (es) to (a);
\draw[->,gray] (c) to (a);
\draw[->,green] (sp) -- (c);
\draw[->,red] (e1) -- (es);
\draw[->,blue] (a) -- (t);
\draw[->,green] (c) to (e1);
\end{scope}
\end{tikzpicture}

\vspace{-.2cm}
\end{proof}

\begin{proposition}\label{prop:noyi}
If $A_4(G)$ then $G$ is vulnerable.
\vspace{-2mm}
\end{proposition}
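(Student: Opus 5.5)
The plan is to build an $st$-embedding of $\W$ in which $\x_1$ and $\hat\x$ play the two ``middle'' nodes, and $y_i$ plays the role of $\phi(t)$. The argument mirrors Proposition~\ref{prop:noxi}, with $Q$ below playing the role of $P_1$ there.

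Assume $A_4(G)$ and let $y_i\in(t'\overset{\hat P}{\pat}t]$. By definition there is a path $Q=\x_1\pat y_i$ whose internal nodes avoid $\hat P$. We may also assume they avoid $\C$: if $Q$ re-entered $\C$, replace $\x_1$ by the last node of $\C$ on $Q$. That node is an exit $\x_j$, and we prepend $\x_1\overset{\C}{\pat}\x_j$, which is disjoint from $Q$ after $\x_j$ and lies inside $[\x_1\overset{\C}{\pat}\hat\x]$. By minimality of $\e^*$, neither $Q$ nor $\hat P$ meets $P^*$; any such intersection would give a node of $\C$ closer to $s$ than $\e^*$.

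The candidate embedding is:
\begin{itemize}
\item $\phi(s)=\e^*$, with the prefix $P^*$ from $s$;
\item $\phi(u)=\x_1$ and $\phi(v)=\hat\x$;
\item $\phi(t)=y_i$, with the suffix $y_i\overset{\hat P}{\pat}t$;
\item $\psi(s\edge u)=\e^*\overset{\C}{\pat}\x_1$ and $\psi(u\edge v)=\x_1\overset{\C}{\pat}\hat\x$;
\item $\psi(u\edge t)=Q$;
\item $\psi(s\edge v)$ and $\psi(v\edge t)$ still to be fixed.
\end{itemize}
The natural choice for $\psi(v\edge t)$ is $\hat\x\overset{\hat P}{\pat}y_i$. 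The trouble is the edge $s\edge v$: we need a path from $\e^*$ to $\hat\x$ internally disjoint from the rest, and the obvious candidate, going backwards on $\C$, is not available.

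So I would instead swap roles, using $\W$'s symmetric structure. We need two paths out of the cycle region that ``cross'', which is exactly what $t'$ provides. By definition $t'$ is reachable from $\x_1$ via some $R=\x_1\pat t'$ avoiding $\C\cup\hat P$ internally, and $t'$ is the furthest such node. Since $y_i>t'$ on $\hat P$, the path $Q$ reaches a node beyond $t'$. Hence $Q$ must pass through $\C$ again or through nodes other than those of $R$.

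I would take $\phi(u)=\x_1$, $\phi(v)=t'$, and use:
\begin{itemize}
\item $\psi(s\edge u)=\e^*\overset{\C}{\pat}\x_1$;
\item $\psi(s\edge v)=\e^*\overset{\C}{\pat}\hat\x\overset{\hat P}{\pat}t'$, which needs $\x_1$ off it, so I would route from the portion of $\C$ after $\x_1$;
\item $\psi(u\edge v)$ equal to a suffix of $R$;
\item $\psi(u\edge t)=Q$ and $\psi(v\edge t)=t'\overset{\hat P}{\pat}y_i$.
\end{itemize}
Concretely, let $a$ be the first node of $Q$ lying on $R$, if any. If $R$ and $Q$ are disjoint, the embedding is as above with $s\edge v$ realized through $\e^*\overset{\C}{\pat}\x_1$ and $R$, and $s\edge u$ and the $u$ node placed at $\hat\x$. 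If they meet, split at $a$ as in case 2 of Proposition~\ref{prop:noxi}: $\phi(u)=a$, with $\x_1\overset{R}{\pat}a$ feeding it, and the two exits $a\overset{R}{\pat}t'\overset{\hat P}{\pat}t$ versus $a\overset{Q}{\pat}y_i$ combined with $\hat\x\overset{\hat P}{\pat}t'$.

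The main obstacle is this case analysis on how $Q$ and $R$ interleave. I would get disjointness by choosing $a$ as the last common node of $Q$ and $R$. Because $t'$ is maximal, $a\overset{Q}{\pat}y_i$ cannot hit $(\hat\x\overset{\hat P}{\pat}t']$ after $a$. The remaining check is that the nodes of $\e^*\overset{\C}{\pat}\hat\x$ are untouched by $a\overset{Q}{\pat}y_i$, which holds because internal nodes of $Q$ avoid $\C$.
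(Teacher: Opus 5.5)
Your proposal has a genuine gap, and it sits at the step you treat as a harmless normalization. By the maximality of $t'$, \emph{every} path $Q=\x_1\pat y_i$ with $\mathring Q\cap\hat P=\emptyset$ must re-enter $\C$. Otherwise $y_i>_{\hat P}t'$ would be reachable from $\x_1$ without touching $\hat P\cup\C$. So ``we may assume $Q$ avoids $\C$'' is false for a path starting at $\x_1$. Your sentence ``$Q$ must pass through $\C$ again or through nodes other than those of $R$'' should have concluded firmly with the first disjunct.

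What your cut actually gives is $P'=\x_j\overset{Q}{\pat}y_i$, where $\x_j$ is the last node of $\C$ on $Q$, $\x_j\in(\x_1\overset{\C}{\pat}\hat\x]$ and $\x_j\neq\x_1$. Prepending $\x_1\overset{\C}{\pat}\x_j$ puts $\C$-nodes back into the interior. Hence every later use of $Q$ as a $\C$-avoiding path out of $\x_1$ is unjustified. This includes $\psi(u\edge t)=Q$ with $\phi(u)=\x_1$, and your final claim that $a\overset{Q}{\pat}y_i$ misses $\e^*\overset{\C}{\pat}\hat\x$.

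Conversely, the case analysis you call the main obstacle is vacuous: $P'$ and $R$ never meet. If $z$ were the first node of $R$ lying on $P'$, then $\x_1\overset{R}{\pat}z\overset{P'}{\pat}y_i$ would avoid $\C\cup\hat P$ internally, contradicting the maximality of $t'$.

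Your concrete embeddings also fail in the case that actually occurs. With $\phi(s)=\e^*$, both $\e^*\overset{\C}{\pat}\x_1$ and $\e^*\overset{\C}{\pat}\hat\x$ contain $\x_1$. There is no ``portion of $\C$ after $\x_1$'' reachable from $\e^*$ without crossing $\x_1$. Placing $\phi(u)$ at $\hat\x$ requires a path $\hat\x\pat y_i$ off $\C\cup\hat P$, which you have only when $\x_j=\hat\x$.

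The missing idea is to use the re-entry exit $\x_j$ as a branch vertex and to move $\phi(s)$ up to $\x_1$. Your disjoint-case embedding, read with $\phi(s)=\x_1$, is one step away from this. The paper takes:
\begin{itemize}
\item $\phi(s)=\x_1$, with prefix $s\overset{P^*}{\pat}\e^*\overset{\C}{\pat}\x_1$;
\item $\phi(u)=\x_j$, $\phi(v)=t'$, $\phi(t)=y_i$;
\item $\psi(s\edge u)=\x_1\overset{\C}{\pat}\x_j$ and $\psi(s\edge v)=R$;
\item $\psi(u\edge v)=\x_j\overset{\C}{\pat}\hat\x\overset{\hat P}{\pat}t'$;
\item $\psi(u\edge t)=P'$ and $\psi(v\edge t)=t'\overset{\hat P}{\pat}y_i$;
\item suffix $y_i\overset{\hat P}{\pat}t$.
\end{itemize}
Internal disjointness then follows from four facts. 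The arcs of $\C$ used are consecutive. $\mathring P'$ and $\mathring R$ avoid $\C\cup\hat P$. $P'\cap R=\emptyset$ by maximality of $t'$. $P^*$ misses $P'$, $R$ and $\hat P$ by minimality of $\e^*$.
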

\begin{proof}
If $G$ has at least one $y_i$, then there exists a $P=\x_1\pat y_i$ with $\mathring P\cap \hat P=\emptyset$. 
Let $\x_j$ be the last node of $\C$ along $P$; we define $P'=\x_j\overset{P}{\pat}y_i$ and $P_1=\x_1\pat t'$. 
Since $t' <_{\hat P} y_i$ and there is no $Q=\x_1\pat y_i$ s.t. $\mathring{Q}\cap C=\emptyset$ (by maximality of $t'$), then $P'\cap P_1=\emptyset$, 
$(\mathring P'\cup \mathring P_1)\cap P^*=\emptyset$ and $(\mathring P'\cup \mathring P_1)\cap \hat P=\emptyset$. 
Thus, we are in the situation depicted at left and we can find an $st$-embedding of $\W$ as depicted to the right:

\hspace*{-.4cm}
\begin{tikzpicture}[scale=0.85,every node/.style={circle,draw, minimum size=4mm, inner sep=1pt}]
\node (s) at (0,0) {$s$};
\node (es) at (1,0) {$\e^*$};
\node (x1) at (2,0) {$\x_1$};
\node (xj) at (3,0) {$\x_j$};
\node (xt) at (4,0) {$\hat \x$};
\node (tp) at (5,0) {$t'$};
\node (yi) at (6,0) {$y_i$};
\node (t) at (7,0) {$t$};

\draw [->, orange] (s) -- (es);
\draw[->,red] (es) to (x1);
\draw[->,red] (x1) to (xj);
\draw[->, red] (xj) to (xt);
\draw[->,blue] (xt) -- (tp);
\draw [->,red,bend right =45] (xt) to (es);
\draw[->,blue] (tp) -- (yi);
\draw[->,blue] (yi) -- (t);
\draw [->,green,bend right] (x1) to node[below,draw=none, text=green]{$P_1$}(tp);
\draw [->,gray,bend left =45] (xj) to node[above,draw=none, text=gray]{$P'$}(yi);
\begin{scope}[shift={(8,0)}]
\node (s) at (0,0) {$s$};
\node (es) at (1,0) {$\e^*$};
\node (x1) at (2,0) {$\x_1$};
\node (xt) at (3,0) {$\hat\x$};
\node (xj) at (3,1) {$\x_j$};
\node (tp) at (3,-1) {$t'$};
\node (yi) at (4,0) {$y_i$};
\node (t) at (5,0) {$t$};

\draw [->, orange] (s) -- (es);
\draw[->,red] (es) to (x1);
\draw[->,red] (x1) to (xj);
\draw[->,blue] (xt) -- (tp);
\draw[->,green] (x1) -- (tp);
\draw[->,blue] (yi) -- (t);
\draw [->,red] (xj) to (xt);
\draw [->,blue] (tp) to (yi);
\draw [->,gray] (xj) -- (yi);
\end{scope}
\end{tikzpicture}

\end{proof}

\begin{proposition}\label{prop:jumponpt}
If $A_5(G)$ then $G$ is vulnerable.
\end{proposition}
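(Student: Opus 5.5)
The plan is to build an $st$-embedding of $\W$ from four ingredients: the cycle $\C$, the path $\hat P$, a path witnessing $\gamma_i\in\Gamma$, and the path $P$ given by $A_5(G)$. This follows the pattern of Propositions~\ref{prop:noxi}--\ref{prop:noyi}.

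\textbf{Setup.} Since $\gamma_i\in\Gamma$, there is a path $\x_1\pat\gamma_i$ whose interior avoids $\hat P$. Let $\x_j$ be its last node on $\C$, and let $Q=\x_j\pat\gamma_i$ be the remaining suffix. Then $\mathring Q\cap(\C\cup\hat P)=\emptyset$, and $\x_j\in\Xi$ because $Q$ followed by $\gamma_i\overset{\hat P}{\pat}t$ exits $\C$. By minimality of $\e^*$, none of $Q$, $\hat P$, $P$ meets $P^*$ except possibly at $\e^*$. As in the earlier proofs, $P^*$ followed by $\e^*\overset{\C}{\pat}\x_j$ gives the required path from $s$ to $\phi(s)$.

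\textbf{Choice of $\phi(u)$.} On $P=\hat\x\pat t$, let $c$ be the last node of $P$ lying on $[\hat\x\overset{\hat P}{\pat}\gamma_i)\cup\mathring Q$. Such a node exists since $\hat\x$ qualifies, and $P$ avoids $\gamma_i$. After $c$, let $d$ be the first node of $P$ on $(\gamma_i\overset{\hat P}{\pat}t]$; it exists because $P$ ends in $t$. The segment $R=c\overset{P}{\pat}d$ is internally disjoint from $\C$, $Q$ and $\hat P$. There are two cases.

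\emph{Case $c\in[\hat\x\overset{\hat P}{\pat}\gamma_i)$.} Take $\phi(s)=\x_j$, $\phi(u)=c$, $\phi(v)=\gamma_i$, $\phi(t)=d$. The five edges of $\W$ are mapped as follows:
\begin{itemize}
\item $s\to u$ to $\x_j\overset{\C}{\pat}\hat\x\overset{\hat P}{\pat}c$;
\item $s\to v$ to $Q$;
\item $u\to v$ to $c\overset{\hat P}{\pat}\gamma_i$;
\item $v\to t$ to $\gamma_i\overset{\hat P}{\pat}d$;
\item $u\to t$ to $R$.
\end{itemize}
Finally, $d\overset{\hat P}{\pat}t$ provides the path from $\phi(t)$ to $t$.

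\emph{Case $c\in\mathring Q$.} Take $\phi(v)=c$ and $\phi(u)=\hat\x$, keeping $\phi(s)=\x_j$ and $\phi(t)=d$. The edges are mapped as follows:
\begin{itemize}
\item $s\to v$ to $\x_j\overset{Q}{\pat}c$;
\item $s\to u$ to $\x_j\overset{\C}{\pat}\hat\x$;
\item $v\to t$ to $R$;
\item $u\to v$ to the prefix $\hat\x\overset{P}{\pat}c$;
\item $u\to t$ to $\hat\x\overset{\hat P}{\pat}\gamma_i\overset{\hat P}{\pat}d$.
\end{itemize}
The last one is a genuine $\hat P$-path that does not use $Q$. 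The degenerate subcase $\x_j=\hat\x$ is handled with the same maps, using $\phi(s)=\e^*$ and reaching $\hat\x$ along $\C$.

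\textbf{Main obstacle.} The difficult step is verifying pairwise internal disjointness in the second case. The prefix $\hat\x\overset{P}{\pat}c$ may wander and touch $[\hat\x\overset{\hat P}{\pat}\gamma_i)$ or $\mathring Q$ before $c$. I would handle this by rechoosing $\phi(u)$ as the last node of $\hat\x\overset{P}{\pat}c$ lying on $\hat P$ before $\gamma_i$. Then $s\to u$ becomes $\x_j\overset{\C}{\pat}\hat\x\overset{\hat P}{\pat}\phi(u)$, and the segment of $P$ from $\phi(u)$ to its next hit on $\mathring Q$ replaces $c$. Together with ``first/last intersection'' choices of this kind, every used segment of $P$ has interior off $\C\cup\hat P\cup Q$. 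The two remaining facts, $P\cap\C=\{\hat\x\}$ and $P^*$ being disjoint from everything, follow directly from the definition of $A_5(G)$ and the minimality of $\e^*$.
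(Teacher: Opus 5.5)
Your Case~1 is correct and is essentially the paper's ``disjoint'' case, with your $c,d$ in the roles of the paper's $b,a$. The gap is in Case~2.

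\textbf{The gap in Case~2.} There you set $\phi(u)=\hat\x$ and map $u\to v$ to the prefix $\hat\x\overset{P}{\pat}c$. Meanwhile $u\to t$ and the tail together use all of $\hat\x\overset{\hat P}{\pat}t$. Because $c$ is the \emph{last} node of $P$ in $[\hat\x\overset{\hat P}{\pat}\gamma_i)\cup\mathring Q$, nothing controls what $P$ does before $c$. It may visit $\hat P$ on either side of $\gamma_i$, and it may visit $\mathring Q$.

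Your repair removes visits to $[\hat\x\overset{\hat P}{\pat}\gamma_i)$ and to $\mathring Q$, but visits to $(\gamma_i\overset{\hat P}{\pat}t]$ survive. Moreover, once $c$ is replaced by an earlier hit $c'$ on $\mathring Q$, $R$ no longer starts at $\phi(v)$, and you supply no path $c'\pat d$.

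Here is a concrete failure. Let $\hat P=\hat\x\to\gamma_i\to w\to d\to t$, $Q=\x_j\to q\to\gamma_i$ and $P=\hat\x\to w\to q\to d\to t$. Complete it, e.g., with $s\to\x_j=\e^*$ and $\C=\x_j\to\hat\x\to\x_j$; this gives $t'=d$ and $\gamma_i\in\Gamma$. Then $c=q$, and the repair leaves $\phi(u)=\hat\x$ and $\phi(v)=q$. The image $\hat\x\to w\to q$ of $u\to v$ and the image $\hat\x\to\gamma_i\to w\to d$ of $u\to t$ both pass through $w$.

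\textbf{The fix.} Let $c$ play $u$ instead of $v$. In Case~2 take $\phi(s)=\x_j$, $\phi(u)=c$, $\phi(v)=\gamma_i$, $\phi(t)=d$, and map the edges as follows:
\begin{itemize}
\item $s\to u$ to $\x_j\overset{Q}{\pat}c$;
\item $s\to v$ to $\x_j\overset{\C}{\pat}\hat\x\overset{\hat P}{\pat}\gamma_i$;
\item $u\to v$ to $c\overset{Q}{\pat}\gamma_i$;
\item $u\to t$ to $R$;
\item $v\to t$ to $\gamma_i\overset{\hat P}{\pat}d$.
\end{itemize}
Keep the tail $d\overset{\hat P}{\pat}t$. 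Disjointness then follows only from $\mathring Q\cap(\C\cup\hat P)=\emptyset$ and your correct observation about $\mathring R$. The prefix of $P$ is never used, so with this change your ``last landing'' argument is complete, and somewhat cleaner than the paper's.

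\textbf{Comparison with the paper.} The paper instead takes a \emph{first} landing $a$ of $P$ on $P_i$ (your $Q$) or on $\hat P$ beyond $\gamma_i$. It lets $b$ be the last $\hat P$-node of $P$ before $a$, and in the intersecting case sets $\phi(u)=b$, $\phi(v)=a$, $\phi(t)=\gamma_i$. For this to work, $a$ must be the first node of $P$ in $\mathring{P_i}\cup(\gamma_i\overset{\hat P}{\pat}t]$. In the example above, reading ``first node of $P$ along $P_i$'' literally gives $a=q$ and $b=w$, which lies beyond $\gamma_i$.

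\textbf{Two minor points.}
\begin{itemize}
\item $\x_j\neq\hat\x$ always holds, since the path from $\x_1$ starts at $\x_1\neq\hat\x$ and has interior avoiding $\hat P\ni\hat\x$. So your degenerate subcase is vacuous.
\item Like the paper, you need $\gamma_i\neq t$ for $d$ to exist, and you should state this assumption explicitly. The case $\gamma_i=t'=t$ can occur, and then $A_5$ holds via $P=\hat P$ even in the non-vulnerable net with edges $s\to\x_1$, $\x_1\to\hat\x\to\x_1$, $\x_1\to t$, $\hat\x\to t$.
\end{itemize}
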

\begin{proof}
Since a $\gamma_i$ exists, then there exists a $P_i=\x_j\pat\gamma_i$ s.t. $\mathring P_i\cap \C=\emptyset$. 
By minimality of $\e^*$, it holds that $(P\cup P_i)\cap P^*=\emptyset$. 
Let $a$ be the first node of $P$ along $P_i$, if any; otherwise, let $a$ be the first node of $P$ along $\hat P$ (notice that $a>_{\hat P} \gamma_i$ -- at most $a$ is $t$).
Let $P'=b\overset{P}{\pat} a$, where $b$ is the last node of $P$ along $\hat P$ before $a$; thus, $\mathring P'\cap(\hat P\cup P_i)$. 

Since $\mathring P'\cap P_i=\emptyset$ and $\mathring P'\cap \hat P=\emptyset$, we can find an $st$-embedding of $\W$. 
The case in which $b\overset{P'}{\pat}a$ and $P_i$ are disjoint is given to the left and the $st$-embedding of $\W$ at the right:

\hspace*{-.4cm}
\begin{tikzpicture}[scale=0.869,every node/.style={circle,draw, minimum size=4mm, inner sep=1pt}]
\node (s) at (0,0) {$s$};
\node (es) at (1,0) {$\e^*$};
\node (xj) at (2,0) {$\x_j$};
\node (xt) at (3,0) {$\hat \x$};
\node (b) at (4,0) {$b$};
\node (gi) at (5,0) {$\gamma_i$};
\node (a) at (6,0) {$a$};
\node (t) at (7,0) {$t$};

\draw [->, orange] (s) -- (es);
\draw[->,red] (es) to (xj);
\draw[->,red] (xj) to (xt);
\draw[->,blue] (xt) -- (b);
\draw [->,red,bend right =45] (xt) to (es);
\draw[->,blue] (b) -- (gi);
\draw[->,blue] (gi) -- (a);
\draw[->,blue] (a) -- (t);
\draw [->,green,bend left=45] (b) to node[above,draw=none, text=green]{$P'$} (a);
\draw [->,gray,bend right =45] (xj) to node[below,draw=none, text=gray]{$P_i$} (gi);
\begin{scope}[shift={(8,0)}]
\node (s) at (0,0) {$s$};
\node (es) at (1,0) {$\e^*$};
\node (xj) at (2,0) {$\x_j$};
\node (xt) at (2.5,0.5) {$\hat\x$};
\node (b) at (3,1) {$b$};
\node (gi) at (3,-1) {$\gamma_i$};
\node (a) at (4,0) {$a$};
\node (t) at (5,0) {$t$};

\draw [->, orange] (s) -- (es);
\draw[->,red] (es) to (xj);
\draw[->,blue] (xt) -- (b);
\draw[->,green] (xj) -- (gi);
\draw[->,blue] (gi) -- (a);
\draw [->,red] (xj) to (xt);
\draw [->,blue] (b) to (gi);
\draw [->,gray] (xj) -- (gi);
\draw [->,green] (b) -- (a);
\draw [->,blue] (a) to (t);
\end{scope}
\end{tikzpicture}

\noindent
The case in which $b\overset{P'}{\pat}a$ and $P_i$ intersect is given below:

\hspace*{-.4cm}
\begin{tikzpicture}[scale=0.869,every node/.style={circle,draw, minimum size=4mm, inner sep=1pt}]
\node (s) at (0,0) {$s$};
\node (es) at (1,0) {$\e^*$};
\node (xj) at (2,0) {$\x_j$};
\node (xt) at (3,0) {$\hat \x$};
\node (b) at (4,0) {$b$};
\node (gi) at (5,0) {$\gamma_i$};
\node (a) at (5,-1) {$a$};
\node (t) at (6,0) {$t$};

\draw [->, orange] (s) -- (es);
\draw[->,red] (es) to (xj);
\draw[->,red] (xj) to (xt);
\draw[->,blue] (xt) -- (b);
\draw [->,red,bend right =45] (xt) to (es);
\draw[->,blue] (b) -- (gi);
\draw[->,blue] (gi) -- (t);
\draw [->,green,bend right=30] (b) to node[above,draw=none, text=green]{$P'$}(a);
\draw [->,gray,bend right =20] (xj) to node[below,draw=none, text=gray]{$P_i$}(a);
\draw [->,gray] (a) to node[right,draw=none, text=gray]{$P_i$}(gi);
\begin{scope}[shift={(8,0)}]
\node (s) at (0,0) {$s$};
\node (es) at (1,0) {$\e^*$};
\node (xj) at (2,0) {$\x_j$};
\node (xt) at (2.5,0.5) {$\hat\x$};
\node (b) at (3,1) {$b$};
\node (gi) at (4,0) {$\gamma_i$};
\node (a) at (3,-1) {$a$};
\node (t) at (5,0) {$t$};

\draw [->, orange] (s) -- (es);
\draw[->,red] (es) to (xj);
\draw[->,blue] (xt) -- (b);
\draw [->,gray] (a) to (gi);
\draw [->,red] (xj) to (xt);
\draw [->,blue] (b) to (gi);
\draw [->,gray] (xj) -- (a);
\draw [->,green] (b) -- (a);
\draw [->,blue] (gi) to (t);
\end{scope}
\end{tikzpicture}

\noindent
Note that $b\overset{\hat P}{\pat}\gamma_i$, $a\overset{P_i}{\pat}\gamma_i$ (in the intersection case) and $\gamma_i\overset{\hat P}{\pat} a$ (in the disjoint case) are all not empty, since $\gamma_i\notin P$.
\end{proof}

\begin{proposition}\label{prop:orderedexits}
If $A_6(G)$ then $G$ is vulnerable.
\end{proposition}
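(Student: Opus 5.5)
We argue as in the previous propositions and exhibit an $st$-embedding of $\W$ directly. Fix $\x_i,\x_j$ with $i<j$, nodes $\gamma,\gamma'\in(\hat\x\overset{\hat P}{\pat}t]$ and the paths $P_i=\x_i\pat\gamma$, $P_j=\x_j\pat\gamma'$ given by $A_6(G)$. Their interiors avoid $\C\cup\hat P$. The prefix of the embedding is always $s\overset{P^*}{\pat}\e^*\overset{\C}{\pat}\x_i$. By minimality of $\e^*$, neither $P_i$ nor $P_j$ nor $\hat P$ meets $P^*$: otherwise a node of $\C$ would be reachable from $s$ through a shorter path. This settles the disjointness requirement with the prefix, as in Propositions~\ref{prop:noxi} and~\ref{prop:jumponpt}.

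\textbf{Case (I)}, $\gamma<_{\hat P}\gamma'$ and $P_i,P_j$ internally disjoint. We map $\phi(s)=\x_i$, $\phi(u)=\x_j$, $\phi(v)=\gamma$, $\phi(t)=\gamma'$. The edges are mapped as follows:
\begin{itemize}
\item $s\edge u$ to $\x_i\overset{\C}{\pat}\x_j$;
\item $s\edge v$ to $P_i$;
\item $u\edge v$ to $\x_j\overset{\C}{\pat}\hat\x\overset{\hat P}{\pat}\gamma$;
\item $u\edge t$ to $P_j$;
\item $v\edge t$ to $\gamma\overset{\hat P}{\pat}\gamma'$.
\end{itemize}
The tail is $\gamma'\overset{\hat P}{\pat}t$. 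These paths are pairwise internally disjoint because $i<j$ orders $\x_i,\x_j,\hat\x$ along $\C$ and $\gamma<\gamma'$ on $\hat P$. The prefix $\e^*\overset{\C}{\pat}\x_i$ lies on $\C$ before $\x_i$, so it is also disjoint from the rest. One degenerate case needs care: $\x_j=\hat\x$. Then $u\edge v$ is simply $\hat\x\overset{\hat P}{\pat}\gamma$, and this path is nonempty because $\gamma\neq\hat\x$.

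\textbf{Case (II)}, $\mathring P_i\cap\mathring P_j\neq\emptyset$. Let $c$ be the first node of $P_j$ that lies on $P_i$. Set $\phi(s)=\x_i$, $\phi(u)=\x_j$, $\phi(v)=c$, and let $\phi(t)$ be the endpoint $\gamma$ of $P_i$. The edges are mapped as follows:
\begin{itemize}
\item $s\edge u$ to $\x_i\overset{\C}{\pat}\x_j$;
\item $s\edge v$ to $\x_i\overset{P_i}{\pat}c$;
\item $u\edge v$ to $\x_j\overset{P_j}{\pat}c$;
\item $v\edge t$ to $c\overset{P_i}{\pat}\gamma$;
\item $u\edge t$ to $\x_j\overset{\C}{\pat}\hat\x\overset{\hat P}{\pat}\gamma$.
\end{itemize}
The tail is again $\gamma\overset{\hat P}{\pat}t$. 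Since $c$ is the first meeting point, $\x_j\overset{P_j}{\pat}c$ is internally disjoint from $P_i$. The interiors avoid $\C\cup\hat P$, so the remaining disjointness checks are immediate.

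\textbf{Main obstacle.} The delicate part is Case (I) when $P_i$ and $P_j$ also intersect. In that situation Case (II) already applies, so we may assume they are disjoint whenever we use (I). A further subtlety is whether $\gamma$ or $\gamma'$ could equal $t$: then the tail is empty, which the definition allows. We also have to make sure none of the chosen segments collapses, namely $\x_i\neq\x_j$ and $c\neq\x_i$. The latter holds because $c\in\mathring P_i$.
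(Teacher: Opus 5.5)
Your proposal is correct and is essentially the paper's own proof. The paper splits on whether $\mathring P_i\cap\mathring P_j$ is empty and uses the same two embeddings of $\W$: in the disjoint case $\phi(v),\phi(t)$ are the landing points $\gamma<_{\hat P}\gamma'$, and in the intersecting case $\phi(v)=c$ and $\phi(t)$ is the endpoint of $P_i$. Your explicit choice of $c$ as the first node along $P_j$ lying on $P_i$ is exactly what makes $\x_j\overset{P_j}{\pat}c$ internally disjoint from $P_i$.

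One wording fix concerns disjointness from $P^*$. The minimality argument is that a common node $w\neq\e^*$ would be reachable from $\C$ and would reach $\e^*$. Hence $w$ would lie on a non-trivial cycle while being strictly closer to $s$ than $\e^*$. It is not that some node of $\C$ becomes closer to $s$.
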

\begin{proof}
By minimality of $\e^*$, it holds that $(P_i\cup P_j)\cap P^*=\emptyset$. Let $c\in\mathring{P_i}\cap\mathring{P_j}$ be the first node of $P_j$ along $P_i$, if any. 
We show the $st$-embedding of $\W$ for both cases; note that all the paths considered are disjoint by definition.

If $\mathring{P_i}\cap\mathring{P_j}=\emptyset$, then:

\begin{tikzpicture}[scale=0.8,every node/.style={circle,draw, minimum size=4mm, inner sep=1pt}]
\node (s) at (0,0) {$s$};
\node (es) at (1,0) {$\e^*$};
\node (xi) at (2,0) {$\x_i$};
\node (xj) at (3,0) {$\x_j$};
\node (xt) at (4,0) {$\hat \x$};
\node (a) at (5,0) {$a$};
\node (b) at (6,0) {$b$};
\node (t) at (7,0) {$t$};

\draw [->, orange] (s) -- (es);
\draw[->,red] (es) to (xi);
\draw[->,red] (xi) to (xj);
\draw[->, red] (xj) to (xt);
\draw[->,blue] (xt) -- (a);
\draw [->,red,bend right =45] (xt) to (es);
\draw[->,blue] (a) -- (b);
\draw[->,blue] (b) -- (t);
\draw [->,green,bend right] (xi) to node[below,draw=none, text=green]{$P_i$}(a);
\draw [->,gray,bend left =45] (xj) to node[above,draw=none, text=gray]{$P_j$}(b);
\begin{scope}[shift={(8,0)}]
    
\node (s) at (0,0) {$s$};
\node (es) at (1,0) {$\e^*$};
\node (xi) at (2,0) {$\x_i$};
\node (xt) at (3,0) {$\hat\x$};
\node (xj) at (3,1) {$\x_j$};
\node (a) at (3,-1) {$a$};
\node (b) at (4,0) {$b$};
\node (t) at (5,0) {$t$};

\draw [->, orange] (s) -- (es);
\draw[->,red] (es) to (xi);
\draw[->,red] (xi) to (xj);
\draw[->,blue] (xt) -- (a);
\draw[->,green] (xi) -- (a);
\draw[->,blue] (b) -- (t);
\draw [->,red] (xj) to (xt);
\draw [->,blue] (a) to (b);
\draw [->,gray] (xj) -- (b);
\end{scope}
\end{tikzpicture}

\noindent
otherwise:

\begin{tikzpicture}[scale=0.8,every node/.style={circle,draw, minimum size=4mm, inner sep=1pt}]
\node (s) at (0,0) {$s$};
\node (es) at (1,0) {$\e^*$};
\node (xi) at (2,0) {$\x_i$};
\node (xj) at (3,0) {$\x_j$};
\node (xt) at (4,0) {$\hat \x$};
\node (c) at (4,-1) {$c$};
\node (a) at (5,0) {$a$};
\node (b) at (6,0) {$b$};
\node (t) at (7,0) {$t$};

\draw [->, orange] (s) -- (es);
\draw[->,red] (es) to (xi);
\draw[->,red] (xi) to (xj);
\draw[->, red] (xj) to (xt);
\draw[->,blue] (xt) -- (a);
\draw [->,red,bend right =45] (xt) to (es);
\draw[->,blue] (a) -- (b);
\draw[->,blue] (b) -- (t);
\draw [->,green,bend right] (xi) to node[below,draw=none, text=green]{$P_i$}(c);
\draw [->,green,bend right] (c) to node[right,draw=none, text=green]{$P_i$\ }(a);
\draw [->,gray] (xj) to (c);
\draw [->,gray,bend right =45] (c) to node[below,draw=none, text=gray]{$P_j$}(b);
\begin{scope}[shift={(8,0)}]
\node (s) at (0,0) {$s$};
\node (es) at (1,0) {$\e^*$};
\node (xi) at (2,0) {$\x_i$};
\node (xt) at (3.5,0.5) {$\hat\x$};
\node (xj) at (3,1) {$\x_j$};
\node (a) at (4,0) {$a$};
\node (c) at (3,-1) {$c$};
\node (t) at (5,0) {$t$};

\draw [->, orange] (s) -- (es);
\draw[->,red] (es) to (xi);
\draw[->,red] (xi) to (xj);
\draw[->,blue] (xt) -- (a);
\draw[->,green] (xi) -- (c);
\draw[->,green] (c) -- (a);
\draw [->,red] (xj) to (xt);
\draw [->,blue] (a) to (t);
\draw [->,gray] (xj) -- (c);
\end{scope}
\end{tikzpicture}

\vspace{-2mm}
\end{proof}

\begin{proposition}\label{prop:noexitjumponC}
If $A_7(G)$ then $G$ is vulnerable.
\end{proposition}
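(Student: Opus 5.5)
The plan is to build an $st$-embedding of $\W$ directly, in the same style as Propositions~\ref{prop:noxi}--\ref{prop:orderedexits}. The images of $u$ and $v$ will be $\x_i$ and a node on $[\x_i\overset{\C}{\pat}\hat\x]$, and the image of $t$ will be a node of $\hat P$.

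\textbf{Setup.} Let $P=\e^*\pat\hat\x$ be the path given by $A_7(G)$, which avoids $\x_i$ and $(\hat\x\overset{\hat P}{\pat}t]$. Let $b$ be the last node of $P$ on $[\e^*\overset{\C}{\pat}\x_i)$; it exists because $\e^*\in P$. Let $c$ be the first node of $P$ after $b$ on $(\x_i\overset{\C}{\pat}\hat\x]$; it exists because $\hat\x$ is the endpoint of $P$. Set $P'=b\overset{P}{\pat}c$. Then $\mathring P'$ avoids $[\e^*\overset{\C}{\pat}\hat\x]$ and $(\hat\x\overset{\hat P}{\pat}t]$. The source side is $s\overset{P^*}{\pat}\e^*\overset{\C}{\pat}b$, which is simple since $P^*$ meets $\C$ only in $\e^*$ by minimality. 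By minimality of $\e^*$, none of the paths used below meets $P^*$ except at $\e^*$. This is the same argument used in the earlier propositions.

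\textbf{Exit path.} Since $\x_i\in\Xi$, there is a path $Q=\x_i\pat t$ with $\mathring Q\cap\C=\emptyset$. Let $d$ be the first node of $Q$ lying on $\mathring P'\cup(\hat\x\overset{\hat P}{\pat}t]$. Such a node exists because $t$ qualifies.

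\textbf{Case 1: $d\in\hat P$.} Map $s\mapsto b$, $u\mapsto\x_i$, $v\mapsto c$, $t\mapsto d$. The five paths are:
\begin{itemize}
\item $b\overset{\C}{\pat}\x_i$;
\item $b\overset{P'}{\pat}c$;
\item $\x_i\overset{\C}{\pat}c$;
\item $\x_i\overset{Q}{\pat}d$;
\item $c\overset{\C}{\pat}\hat\x\overset{\hat P}{\pat}d$.
\end{itemize}
The tail $d\overset{\hat P}{\pat}t$ then connects $d$ to the target. Internal disjointness follows from the choice of $b$, $c$ and $d$, together with $\mathring Q\cap\C=\emptyset$.

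\textbf{Case 2: $d\in\mathring P'$.} Now take $v\mapsto d$ and use $\x_i\overset{\C}{\pat}\hat\x\overset{\hat P}{\pat}t$ as the $u$--$t$ side. The paths become $b\overset{P'}{\pat}d$, $\x_i\overset{Q}{\pat}d$, and, for $v$ to $t$, the continuation $d\overset{Q}{\pat}t$ truncated at its first node $g$ on $(\hat\x\overset{\hat P}{\pat}t]$. The image of $t$ is $g$.

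The remaining danger is that $d\overset{Q}{\pat}g$ re-enters $P'$. To avoid this, I would replace $d$ by the last node of $Q$ on $P'$. The $u$--$v$ path is then $\x_i\overset{Q}{\pat}d$, and the $s$--$v$ path is $b\overset{P'}{\pat}d$. These two may now intersect, and that is resolved by taking the first intersection, exactly as in the "otherwise'' cases of Propositions~\ref{prop:noxi} and~\ref{prop:orderedexits}.

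\textbf{Main obstacle.} I expect the delicate step to be this interaction between $Q$ and $P'$, where $Q$ may weave in and out of $P'$. The plan there is to choose $d$ as the last meeting point with $P'$, and then redirect through the first intersection of $Q$ with the chosen subpath of $P'$. This should keep all five images internally disjoint. One should also check that $P'$ avoids $\x_i$ and the tail of $\hat P$, which holds directly by the hypothesis of $A_7(G)$.
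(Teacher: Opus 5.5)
\textbf{Verdict: there is a genuine gap in Case~2.} Your setup and Case~1 are correct and match the paper's ``disjoint'' case. Your $b,c$ play the role of the paper's $a,b$, and Case~1 is exactly the situation where $Q$ reaches $(\hat\x\overset{\hat P}{\pat}t]$ before meeting $\mathring P'$.

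The gap is at the step you flag as the main obstacle. Routing the $v$--$t$ branch along $Q$ after $d$ fails, and your repair does not close. Replacing $d$ by the last node $d'$ of $Q$ on $P'$ makes $d'\overset{Q}{\pat}t$ clean. But the new $u$--$v$ branch $\x_i\overset{Q}{\pat}d'$ may now pass through $(\hat\x\overset{\hat P}{\pat}t]$. That segment is used both by your $u$--$t$ branch $\x_i\overset{\C}{\pat}\hat\x\overset{\hat P}{\pat}g$ and by the tail from $g$ to $t$. Taking a first intersection with $b\overset{P'}{\pat}d'$ does nothing about this conflict.

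Concretely, take $s\edge\e^*$ together with:
\begin{itemize}
\item $\C=\e^*\edge\x_i\edge\hat\x\edge\e^*$ and $P=\e^*\edge p_1\edge p_2\edge p_3\edge\hat\x$;
\item $\hat P=\hat\x\edge h_1\edge h_2\edge t$ and $Q=\x_i\edge p_3\edge p_2\edge h_1\edge p_1\edge q\edge t$.
\end{itemize}
All hypotheses hold, with $b=\e^*$, $c=\hat\x$ and $d=p_3$.
\begin{itemize}
\item Your first attempt gives the $v$--$t$ branch $p_3\edge p_2\edge h_1$. It re-enters $b\overset{P'}{\pat}d$ at $p_2$.
\item Your repair gives $d'=p_1$. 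The $u$--$v$ branch $\x_i\edge p_3\edge p_2\edge h_1\edge p_1$ meets $b\overset{P'}{\pat}p_1$ only in $p_1$, so there is nothing to ``resolve''. Yet it contains $h_1$, which also lies on the $u$--$t$ branch $\x_i\edge\hat\x\edge h_1\edge h_2\edge t$.
\end{itemize}
So the construction as written does not yield an $st$-embedding for an arbitrary choice of $Q$.

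\textbf{The missing idea, which is what the paper does: in Case~2 you need no part of $Q$ beyond $d$.}
\begin{itemize}
\item Keep $s\mapsto b$, $u\mapsto\x_i$, $v\mapsto d$, with branches $b\overset{\C}{\pat}\x_i$, $b\overset{P'}{\pat}d$ and $\x_i\overset{Q}{\pat}d$.
\item Send $t$ to the node $c\in\C$.
\item The $u$--$t$ branch is $\x_i\overset{\C}{\pat}c$, and the $v$--$t$ branch is the remainder of $P'$, namely $d\overset{P'}{\pat}c$.
\item The connection from the image of $t$ to the real target is $c\overset{\C}{\pat}\hat\x\overset{\hat P}{\pat}t$.
\end{itemize}
Internal disjointness is then immediate. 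First, $\mathring P'$ avoids $[\e^*\overset{\C}{\pat}\hat\x]$ and $(\hat\x\overset{\hat P}{\pat}t]$. Second, by the choice of $d$, the prefix $\x_i\overset{Q}{\pat}d$ avoids $\C$, $(\hat\x\overset{\hat P}{\pat}t]$, and all of $\mathring P'$ except $d$.

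In the example above this gives $v\mapsto p_3$ and $t\mapsto\hat\x$. What led you astray is your opening commitment that the image of $t$ lies on $\hat P$. In this case it must be allowed to lie on $\C$, with $\hat P$ used only as the final tail.
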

\begin{proof}
Let $a\in[\e^*\overset{\C}{\pat}\x_i)$ be the last node of $P$ before touching the first node $b\in(\x_i\overset{\C}{\pat}\hat\x]$. 
Let $P_i=\x_i\pat t$ be the exit path from $\x_i$ and $c$ be the first intersection of $P_i$ along $\hat P$. 
Let $d$ be the first intersection of $a\overset{P}{\pat}b$ along $\x_i\overset{P_i}{\pat}c$, if any. 

If $a\overset{P}{\pat}b$ and $\x_i\overset{P_i}{\pat}c$ are disjoint, then:

\begin{tikzpicture}[scale=0.8,every node/.style={circle,draw, minimum size=4mm, inner sep=1pt}]
\node (s) at (0,0) {$s$};
\node (es) at (1,0) {$\e^*$};
\node (a) at (2,0) {$a$};
\node (xi) at (3,0) {$\x_i$};
\node (b) at (4,0) {$b$};
\node (xt) at (5,0) {$\hat \x$};
\node (c) at (6,0) {$c$};
\node (t) at (7,0) {$t$};

\draw [->, orange] (s) -- (es);
\draw[->,red] (es) to (a);
\draw[->,red] (a) to (xi);
\draw[->, red] (xi) to (b);
\draw[->, red] (b) to (xt);
\draw[->,blue] (xt) -- (c);
\draw [->,red,bend right =45] (xt) to (es);
\draw[->,blue] (c) -- (t);
\draw [->,green,bend left] (xi) to node[above,draw=none, text=green]{$P_i$}(c);
\draw [->,gray,bend right =45] (a) to node[below,draw=none, text=gray]{$P$}(b);
\begin{scope}[shift={(8,0)}]

\node (s) at (0,0) {$s$};
\node (es) at (1,0) {$\e^*$};
\node (a) at (2,0) {$a$};
\node (xt) at (3.5,-0.5) {$\hat\x$};
\node (xi) at (3,1) {$\x_i$};
\node (b) at (3,-1) {$b$};
\node (c) at (4,0) {$c$};
\node (t) at (5,0) {$t$};

\draw [->, orange] (s) -- (es);
\draw[->,red] (es) to (a);
\draw[->,red] (a) to (xi);
\draw[->,blue] (xt) -- (c);
\draw[->,green] (xi) -- (c);
\draw[->,red] (b) -- (xt);
\draw [->,red] (xi) to (b);
\draw [->,gray] (a) -- (b);
\draw[->,blue] (c) -- (t);
\end{scope}
\end{tikzpicture}

\noindent
otherwise:

\hspace*{-.4cm}
\begin{tikzpicture}[scale=0.8,every node/.style={circle,draw, minimum size=4mm, inner sep=1pt}]
\node (s) at (0,0) {$s$};
\node (es) at (1,0) {$\e^*$};
\node (a) at (2,0) {$a$};
\node (xi) at (3,0) {$\x_i$};
\node (b) at (4,0) {$b$};
\node (d) at (4,-1) {$d$};
\node (xt) at (5,0) {$\hat \x$};
\node (c) at (6,0) {$c$};
\node (t) at (7,0) {$t$};

\draw [->, orange] (s) -- (es);
\draw[->,red] (es) to (a);
\draw[->,red] (a) to (xi);
\draw[->, red] (xi) to (b);
\draw[->, red] (b) to (xt);
\draw[->,blue] (xt) -- (c);
\draw [->,red,bend right =45] (xt) to (es);
\draw[->,blue] (c) -- (t);
\draw [->,green,bend right] (xi) to node[above,draw=none, text=green]{$P_i$}(d);
\draw [->,green,bend right] (d) to node[below,draw=none, text=green]{$P_i$}(c);
\draw [->,gray,bend right =30] (a) to node[below,draw=none, text=gray]{$P$}(d);
\draw [->,gray] (d) to (b);
\begin{scope}[shift={(8,0)}]
\node (s) at (0,0) {$s$};
\node (es) at (1,0) {$\e^*$};
\node (a) at (2,0) {$a$};
\node (xt) at (5,0) {$\hat\x$};
\node (xi) at (3,1) {$\x_i$};
\node (b) at (4,0) {$b$};
\node (d) at (3,-1) {$d$};
\node (t) at (6,0) {$t$};

\draw [->, orange] (s) -- (es);
\draw[->,red] (es) to (a);
\draw[->,red] (a) to (xi);

\draw[->,green] (xi) -- (d);
\draw[->,red] (b) -- (xt);
\draw [->,red] (xi) to (b);
\draw [->,gray] (a) -- (d);
\draw [->,gray] (d) -- (b);
\draw[->,blue] (xt) -- (t);
\end{scope}
\vspace*{-.8cm}
\end{tikzpicture}

\end{proof}


\vspace{-8mm}
\subsection{Main Theorem}

\begin{theorem}\label{th:Main}
If no $A_i(G)$ holds, for $i = 1,\ldots,7$, then $\hat{e}$ is redundant in $G$.
\end{theorem}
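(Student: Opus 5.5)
We argue by contradiction. Assume no $A_i(G)$ holds and that $\hat e=(\hat\x,z)$ lies on some simple $st$-path $Q$. From $Q$ we extract the relevant pieces and show that each possible shape of $Q$ forces one of the conditions $A_1,\dots,A_7$.

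\textbf{Decomposing $Q$.} Let $Q=s\pat\hat\x\,\hat e\,z\pat t$. The suffix $z\overset{Q}{\pat}t$ starts on $\C$ and must eventually leave $\C$ for good through some exit $\x_j$. Since $Q$ is simple and already used $\hat\x$, we have $\x_j\neq\hat\x$, so $\x_j$ comes before $\hat\x$ in the order on $[\e^*\overset{\C}{\pat}\e^*)$. Meanwhile the prefix $s\overset{Q}{\pat}\hat\x$ enters $\C$ for the last time before $\hat\x$ at some entry node.

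\textbf{Using the negated conditions.}
\begin{itemize}
\item $\lnot A_1$ means no node of $(\x_1\overset{\C}{\pat}\hat\x\overset{\hat P}{\pat}t')$ is directly reachable from $s$.
\item $\lnot A_3$ constrains where $\e_1$ sits and how it reconnects to $(\e^*\overset{\C}{\pat}\hat\x]$.
\item $\lnot A_7$ says every path from $\e^*$ to $\hat\x$ avoiding $(\hat\x\overset{\hat P}{\pat}t]$ passes through each $\x_i$.
\end{itemize}
Together these should show that the prefix of $Q$ reaching $\hat\x$ must traverse, in order along $\C$ or through detours, every exit $\x_i$, in particular $\x_j$. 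The one alternative is that the prefix uses $\hat P$ (or nodes of $\Omega$) before $\hat\x$. That case is handled by $\lnot A_2$ (Proposition~\ref{prop:onlydirectentry}, item 2: from $\omega_j$ no path reaches $\C$), together with $\lnot A_4$ and $\lnot A_5$, which rule out reaching $\hat\x$ after visiting a node of $(\hat\x\overset{\hat P}{\pat}t]$. The conclusion is that $\x_j$ is already on the prefix, contradicting simplicity of $Q$.

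\textbf{Route for the suffix.} Alternatively, one can argue about how the suffix exits. By $\lnot A_4$ and the definition of $t'$, every exit path from an $\x_i$ meets $\hat P$ first inside $(\hat\x\overset{\hat P}{\pat}t']$. By $\lnot A_6$, these first meeting points are weakly decreasing in $i$ and the exit paths are internally disjoint. By $\lnot A_5$, every $\hat\x\pat t$ path avoiding $\C$ passes through each $\gamma_i$. So the suffix from $\x_j$ must hit $\hat P$ at some $\gamma$ and then continue to $t$ through a vertex that the prefix, having gone from $s$ to $\hat\x$, must also use, or else the prefix would give an $x_i$ or an $\omega_j$ reachable appropriately, against $\lnot A_1$ or $\lnot A_2$.

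\textbf{Main obstacle.} The hard step is the case analysis showing that the prefix $s\pat\hat\x$ and the suffix $\x_j\pat t$ cannot be node-disjoint. I would first fix $\e$ as the last node of $\C$ on the prefix before a maximal $\C$-segment ending at $\hat\x$, then split on the position of $\e$: either in $(\hat\x\overset{\C}{\pat}\x_1]$, where $\lnot A_3$ and $\lnot A_7$ apply, or in $(\x_1\overset{\C}{\pat}\hat\x]$, which is impossible by $\lnot A_1$. In the first case I would show that the $\C$-segment from $\e$ to $\hat\x$ contains $\x_j$, unless the prefix jumps across it. Such a jump would yield a path $\e^*\pat\hat\x$ avoiding $\x_j$, which is exactly $A_7$ (or $A_3$ when the jump starts before $\e^*$).
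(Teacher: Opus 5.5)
\textbf{There is a genuine gap.} Your first route rests on a false claim. You say that $\lnot A_2,\lnot A_4,\lnot A_5$ forbid the prefix $s\overset{Q}{\pat}\hat\x$ from reaching $\hat\x$ after visiting $(\hat\x\overset{\hat P}{\pat}t]$, so the prefix must contain every exit. Proposition~\ref{prop:onlydirectentry}(2) only says that simple paths $\omega_j\pat t$ avoid $\C$. That keeps $\Omega$-nodes off the prefix of a simple $st$-path, but nodes of $(\hat\x\overset{\hat P}{\pat}t')$ are not $\omega$'s, and $\lnot A_4,\lnot A_5$ say nothing about paths from $\hat P$ back to $\C$.

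Concretely, take $s\to\e^*$, the cycle $\e^*\to\x_1\to\x_2\to\x_3\to\hat\x\to w\to\e^*$ with chord $w\to\x_3$, $\hat P=\hat\x\to g\to h\to t$, exit edges $\x_1\to h$, $\x_2\to g$, $\x_3\to g$, and an edge $g\to\hat\x$. Then $t'=h$, $\Gamma=\{g,h\}$, $\Omega=\{h\}$, $\E=\{\e^*\}$, and one checks that no $A_i$ holds. Yet $s\to\e^*\to\x_1\to\x_2\to g\to\hat\x$ visits $g$, skips the exit $\x_3$, and re-enters $\C$ at $\hat\x\in(\x_1\overset{\C}{\pat}\hat\x]$. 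So the case you dismiss via $\lnot A_1$ does occur: $\lnot A_1$ only constrains the \emph{first} arrival on $\C\cup\hat P$ from $s$. Likewise, your ``jump'' yields $A_7$ only if it avoids $(\hat\x\overset{\hat P}{\pat}t]$, and the jump through $\hat P$ is exactly the case left open. In this example $\hat e$ is redundant for a different reason: after $\hat\x\to w\to\x_3$, the only way out is through $g$, which is already used.

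\textbf{The missing idea}, which is the paper's argument, is to work with the \emph{first} entry $\e_p$ of $Q$ into $\C$.
By $\lnot A_1$ and Proposition~\ref{prop:onlydirectentry}(2), $s\overset{Q}{\pat}\e_p$ avoids $\hat P$ and $\e_p\in[\e_1\overset{\C}{\pat}\x_1]$. Then $\lnot A_3$ lets you replace that segment by $P^*$, so you may assume $\e_p\in[\e^*\overset{\C}{\pat}\x_1]$.
Since $\e_p\overset{Q}{\pat}\hat\x$ avoids the suffix's exit $\x_j$, $\lnot A_7$ forces it to \emph{touch} $(\hat\x\overset{\hat P}{\pat}t]$, first at some $\gamma_k$. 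This $\gamma_k$ lies in $(\hat\x\overset{\hat P}{\pat}t']$ by $\lnot A_4$, and it is reached from an exit $\bar\x$ preceding $\x_j$, again by $\lnot A_7$.
Now turn to the suffix. Its exit path from $\x_j$ first meets $\hat P$ at some $a$ with $a\le_{\hat P}\gamma_k$ by $\lnot A_6$(I). Simplicity of $Q$ then gives $\hat\x<_{\hat P}a<_{\hat P}\gamma_k$.
Finally, $\hat\x\overset{\hat P}{\pat}a\overset{Q}{\pat}t$ contains a $\hat\x\pat t$ path avoiding $\C$ and $\gamma_k$, i.e.\ $A_5$ holds.

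Your ``suffix route'' collects the right facts about $\lnot A_4,\lnot A_5,\lnot A_6$. However, the vertex shared by prefix and suffix is this $\gamma_k$, and its existence comes from $\lnot A_7$, not from $\lnot A_1$ or $\lnot A_2$.
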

\begin{proof}
    By contradiction, assume that $\hat{e}$ is irredundant; so there exists a (simple) path $P=s\pat \hat\x\pat \x_i\pat t$, for some $i$, such that $\hat e$ is the first edge in $\hat\x\overset P \pat \x_i$. 
    Let $\e_p$ be the first intersection along $P$ with $\C$; we call $P_1=s\overset{P}{\pat}\e_p$, $P_2=\e_p\overset{P}{\pat}\hat\x$ and $P_3=\hat\x\overset{P}{\pat}t$.  
It holds that $P_1\cap\hat P=\emptyset$ since:
    \begin{itemize}
        \item $P_1\cap[\hat\x\pat t')=\emptyset$, otherwise $\e_p$ would be an $x_i$ (and so $A_1(G)$ would hold);
        \item $P_1\cap[t'\pat t]=\emptyset$, otherwise $P_1\cap[t'\pat t]=\omega_j$, for some $j$; by Proposition~\ref{prop:onlydirectentry}(2) there cannot exist an acyclic $\omega_j \pat t$ that touches $\C$, 
        in contradiction with acyclicity of $P$.
    \end{itemize}

 Because $\neg A_1(G)$, we have that $\e_p\notin(\x_1\overset{\C}{\pat}\hat\x]$;
    moreover, we know that $\e_p\in\E$. Thus, $\e_p\in[\hat\xi \overset{\C}{\pat}\x_1]$; by definition of $\e_1$, we have that $\e_p\in[\e_1\overset{\C}{\pat}\x_1]$.
    If $\e_p\in[\e_1\overset{\C}{\pat}\e^*)$, all simple paths $\e_p\pat\hat{\x}$ would pass through $\e^*$, by $\neg A_3(G)$; in that case, we can set $\e_p = \e^*$ and let $P^*$ play the role of $P_1$, and all what follows still works, since $P_2$ and $P_3$ are disjoint from $P^*$, by minimality of $\e^*$. Hence, let us now consider the case in which $\e_p\in[\e^*\overset{\C}{\pat}\x_1]$. 

    Since $\e_p$ is the first node of $P_2$, by $\neg A_7(G)$ we know that $P_2\cap\hat P \neq \emptyset$; let $\gamma_k$ be the first intersection along $P_2$. 
    In particular, by $\neg A_4(G)$, we know that $\gamma_k \in(\hat\x\overset{\hat P}{\pat}t']$.
    Since $\x_i\notin P_2$ (indeed, it belongs to $P_3$), we consider $\bar\x$, the last node of $\C$ touched by $P_2$ before $\gamma_k$. 
    By $\neg A_7(G)$, we have that $\bar{\x}\in[\xi_1\overset{\C}{\pat}\x_i)$ and all the $\x_j \in[\e_p\overset{\C}{\pat}\bar\x]$ are in $P_2$.

    By $\neg A_6(G)$, all paths $P'$ from $\x_i$ to $t$ that do not contain $\gamma_k$ must be such that $\mathring P' \cap [\x_i\overset{\C \cup \hat P}{\pat} \gamma_k] \neq \emptyset$.
Fix one of these $P'$ and let $a$ be the first internal intersection along it with $[\x_i\overset{\C \cup \hat P}{\pat} \gamma_k]$; clearly,  $\hat\x <_{\hat P} a <_{\hat P} \gamma_k$. This however implies $A_5(G)$, since $\hat\xi \overset{\hat P}\pat a \overset{P'}{\pat} t$ is a path 
that does not contain $\gamma_k$. 
Contradiction.
\end{proof}

\section{Algorithmic Checking the  Sufficient Conditions}
\label{sec:algos}


In this section we present the linear-time algorithms that check the conditions $A_1,...,A_7$. 
In what follows, when we perform a visit, we say that a node is a \textit{sink} if we ignore all the edges that leave that node 
and we force to backtrack from that node.

To make the presentation lighter, we present four algorithms that, together, will contribute to the cycle analysis and to the final outcome (i.e., the graph is vulnerable or  a redundant edge is found). However, in doing this, we need to classify nodes and edges as described in Section~\ref{Definitions}, label them accordingly and keep this labeling throughout the execution of these various algorithms. Hence, the graph $G$ is a sort of ‘‘global'' variable, that different algorithms access by reading and modifying the labels of nodes and edges.

For space reasons, all the proofs of this sections are deferred to Appendix~\ref{app:proofs}.
\vspace{-2mm}


    \begin{algorithm}[t]
    \caption{{\sl FindNodes}$(G)$}
       \label{alg:Findthenodes}
    \begin{algorithmic}[1]
        \Require $\e^*$ and all the nodes and edges of $\C$ and $P^*$ are labelled in $G$
        \State 
        Compute the set $\Xi$ of exits, by performing a backward visit from $t$ that considers all nodes in $\C$ as sinks
        \If{$|\Xi| = 1$} \Return $\hat e$ /* $\hat e$ is redundant */
        \EndIf
        \State Sequentially order $\Xi$ by performing a visit from $\e^*$ along $\C$ and label each of its nodes as $\x_1, \x_2, \ldots, \hat\x$
        \State Perform a visit from $\hat\x$ considering all the nodes labelled with $\C$ as sinks and find a simple path between $\hat\x$ and $t$; label all the nodes and edges of this path with $\hat P$
        \State Perform a visit from $\x_1$ considering all nodes labelled with either $\hat P$ or $\C$ as sinks; the last node reached on $\hat P$ (the closest to $t$) is labelled with $t'$
        \ForAll{node $v$ visited in a visit started from $s$}
        \If{$v \in (\x_1\overset{\C}{\pat}\hat{\x}\overset{\hat P}{\pat}t')$}
         \Return {\sc Vuln} /* $A_1(G)$ holds */
        \EndIf
        \If{$v \in [t'\overset{\hat P}{\pat}t]$}
        consider $v$ a sink and add it to $\Omega$
        \EndIf
        \If{$v \in \C\setminus(\x_1\overset{\C}{\pat}\hat\x]$}
        consider $v$ a sink and add it to $\E$
        \EndIf
        \EndFor
        \ForAll{node $v$ visited in a visit started from $\x_1$}
        \If{$v \in (t'\overset{\hat P}{\pat}t]$}
        \Return {\sc Vuln}
        /* $A_4(G)$ holds */
        \EndIf
        \If{$v \in (\hat\x\overset{\hat P}{\pat}t']$}
        consider $v$ a sink and add it to $\Gamma$
        \EndIf
        \EndFor
        \State Sequentially order $\Omega$ by performing a visit from $t'$ along $[t'\overset{\hat P}{\pat}t]$ and label each of its nodes as $\omega_1, \omega_2, \ldots$
        \State Sequentially order $\E$ by performing a visit from $\hat\x$ along $(\hat\x\overset{\C}{\pat}\x_1]$ and label each of its nodes as $\e_1,\e_2, \ldots$
        \State Sequentially order $\Gamma$ by performing a visit from $\hat\x$ along $(\hat\x\overset{\hat P}{\pat}t']$ and label each of its nodes as $\gamma_1, \gamma_2, \ldots$
    \end{algorithmic}
\end{algorithm}

\subsection{\texorpdfstring{Checking Conditions $A_1$ and $A_4$}{Checking Conditions A1 and A4}}
\vspace{-1mm}

Algorithm~\ref{alg:Findthenodes} receives a graph $G$, where some nodes and edges have already been labeled as $\e^*$, $\C$ and $P^*$, and it adds some extra labels to identify all the nodes and paths listed in Section~\ref{Definitions} (notice that some nodes may have more than one label). 
In doing this, it first checks if there are at least two exits (line 2), otherwise it returns that $\hat e$ is redundant (see Remark~\ref{rem:oneExit}).
Also, it checks $A_1$ and $A_4$, by returning {\sc Vuln} if any $x_i$ or $y_i$ is found (see lines 7 and 11, respectively). 
Clearly, the cost is $\mathcal{O}(m)$, since we are just performing a number of visits, and it correctly labels nodes according to the name conventions given in Section~\ref{Definitions}.

\begin{proposition}\label{prop:A1andA4}
{\sl FindNodes}$(G)$ returns {\sc Vuln} if and only if $A_1(G) \lor A_4(G)$.
\vspace{-2mm}       
\end{proposition}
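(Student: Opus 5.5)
The plan is to prove the two directions separately. Both rest on one invariant of a graph search: when a visit treats a set $S$ of nodes as sinks, a node $v$ is reached if and only if there is a simple path from the start to $v$ whose internal nodes avoid $S$. Sinks are reached but not expanded, so they can only occur as endpoints. I will state this as a lemma, proved by the usual induction on the order of discovery in the visit.

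\emph{Soundness.} Suppose {\sl FindNodes} returns {\sc Vuln} at line 7. Then the visit from $s$ reached some $v\in(\x_1\overset{\C}{\pat}\hat\x\overset{\hat P}{\pat}t')$. The sinks are nodes of $[t'\overset{\hat P}{\pat}t]$ and of $\C\setminus(\x_1\overset{\C}{\pat}\hat\x]$, and the first return happens at the first node reached in the target segment. So we may take $v$ to be the first such node discovered. By the invariant there is then a path $s\pat v$ whose internal nodes avoid $\C\cup\hat P$, which says exactly that $v$ is an $x_i$, so $A_1(G)$ holds.

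The same argument applies at line 11. The visit from $\x_1$ uses $(\hat\x\overset{\hat P}{\pat}t']$ as sinks, and I must argue that nodes of $\C$ and the remaining nodes of $\hat P$ are also handled so that the path reaching $v$ avoids other nodes of $\hat P$. The first node of $(t'\overset{\hat P}{\pat}t]$ reached is then reachable from $\x_1$ without passing through other nodes of $\hat P$, which is the definition of a $y_i$, so $A_4(G)$ holds.

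\emph{Completeness.} Suppose $A_1(G)$ holds, witnessed by $x_i$ and a path $P=s\pat x_i$ whose internal nodes avoid $\C\cup\hat P$. I will show the visit from $s$ returns at line 7. The sinks all lie in $\C\cup\hat P$, and no internal node of $P$ is a sink, so the visit reaches $x_i$ unless it returns earlier. Either way it returns {\sc Vuln}. If the loop at line 6 does not return, $A_1$ fails, and the loop from $\x_1$ runs. For $A_4(G)$, the witness path $\x_1\pat y_i$ has no internal node in $\hat P$, so no internal node is in $\Gamma$ and the visit reaches $y_i$, or returns earlier.

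One gap needs handling here. The definition of $Y$ only forbids $\hat P$-nodes, while the visit as written may or may not treat $\C$ as sinks. I will read the visit as not treating $\C$ as sinks, matching the definition. I will also observe that the early exit at line 2 ($|\Xi|=1$) and the computation of $t'$ do not affect the claim: the statement concerns the {\sc Vuln} outputs, and those only arise at lines 7 and 11.

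\emph{Main obstacle.} The delicate point is making the path reconstructed from the visit match the exact ``without passing through other nodes of $\C$ and $\hat P$'' clauses in the definitions of $X$ and $Y$. In particular, $\C$-nodes inside $(\x_1\overset{\C}{\pat}\hat\x]$ are not declared sinks in the $s$-visit, but reaching one of them already triggers the return. I plan to resolve this by taking the first target node discovered, which guarantees the path's interior avoids every node of the target region. I will then check that the rest of $\C\cup\hat P$ consists entirely of sinks, so the path's interior avoids it as well.
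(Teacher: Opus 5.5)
Your strategy is the paper's own: its proof only asserts that each visit reaches its target region exactly when the corresponding reachability condition holds. Your handling of the line-6 visit is sound, and more careful than the paper's. There the sinks and the return-triggers together cover all of $\C\cup\hat P$, so the first target discovered is reached along a path whose interior avoids $\C\cup\hat P$.

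\textbf{The gap.} The step that fails is the one you postpone for the line-10 visit, namely that ``the remaining nodes of $\hat P$ are also handled''. You read $\C$ as not treated as sinks there. That reading is correct, and it is also the one completeness needs, since $A_4$-witnesses may run through $\C$. Under it, the nodes of $\hat P$ outside $(\hat\x\overset{\hat P}{\pat}t']$ are of two kinds. The return-triggers $(t'\overset{\hat P}{\pat}t]$ are covered by your first-discovered argument. The node $\hat\x$ itself is neither a sink nor a trigger. The visit reaches $\hat\x$ along $\C$, expands it, and may leave it by an edge off $\hat P$. The reconstructed path then has $\hat\x\in\hat P$ in its interior and witnesses no $y_i$.

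\textbf{A counterexample.} Take edges $s\to\x_1$, $\x_1\to\hat\x$, $\hat\x\to\x_1$, $\x_1\to m$, $\hat\x\to m$, $m\to t$, $\hat\x\to w$, $w\to t$, and let line 4 pick $\hat P=\hat\x\to m\to t$. Then $\e^*=\x_1$ and $t'=m$.
\begin{itemize}
\item The visit from $s$ stops at the sink $\x_1$.
\item The visit from $\x_1$ returns \textsc{Vuln} through $\x_1\to\hat\x\to w\to t$.
\item Every path from $\x_1$ to $t$ has $m$ or $\hat\x$ in its interior, so $Y=\emptyset$.
\item The only candidate $x_i$ is $\hat\x$, which is reachable from $s$ only through $\x_1\in\C$, so $X=\emptyset$.
\end{itemize}
So the ``only if'' direction is false as literally stated, and no argument can close your deferred step. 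The paper's two-sentence proof has the same blind spot. The algorithm's answer is still right here: the net is vulnerable, by mapping $s,u,v,t$ of $\W$ to $\x_1,\hat\x,m,t$.

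\textbf{Repairs.} Either of the following gives a true statement.
\begin{itemize}
\item Add $\hat\x$ to the sinks of the line-10 visit. Your first-discovered-target argument then goes through verbatim. Completeness is untouched, because an $A_4$-witness has interior disjoint from $\hat P\ni\hat\x$.
\item Relax $Y$ to allow passing through $\hat\x$. The proof of Proposition~\ref{prop:noyi} tolerates this, since it only uses the suffix after the path's last node on $\C$. Theorem~\ref{th:Main} only needs $\neg A_4(G)$, which becomes stronger.
\end{itemize}

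\textbf{A smaller point.} Your reason for dismissing line 2 does not work for the ``if'' direction. An early return there would output $\hat e$ even if $A_1(G)\lor A_4(G)$ held. What covers this case is the standing assumption of Remark~\ref{rem:oneExit} that $|\Xi|\ge 2$: the conditions are defined under it, and line 2 never fires.
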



\subsection{\texorpdfstring{Checking Condition $A_3$}{Checking Condition A3}}

\begin{algorithm}[t]
    \caption{{\sl NoBadChord}$(G)$}
    \label{alg:nobadchord}
    \begin{algorithmic}[1]
        \Require In $G$ all nodes and edges are labelled as described in Section~\ref{Definitions}
        \If {$\e_1\in [\e^*\overset{\C}{\pat}\hat\x]$} \Return {\sc NotVuln}
        \EndIf
        \ForAll{node $v$ visited in a visit started from $\e_1$ in $G\setminus\{\e^*\}$}
        \If {$v \in (\e^*\overset{\C}{\pat}\hat\x]$} \Return {\sc Vuln} /* $A_3(G)$ holds */
        \EndIf
        \EndFor
        \State\Return  {\sc NotVuln}
    \end{algorithmic}
\end{algorithm}

Algorithm~\ref{alg:nobadchord} checks condition $A_3$ by first checking whether $\e_1\in [\e^*\overset{\C}{\pat}\hat\x]$ and, if not, by performing a single visit of $G\setminus\{\e^*\}$ from $\e_1$. Its cost is clearly $\mathcal{O}(m)$.

\begin{proposition}\label{prop:A3}
{\sl NoBadChord}$(G)$ returns {\sc Vuln} if and only if $A_3(G)$.
\vspace{-1mm}
\end{proposition}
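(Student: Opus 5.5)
We prove the two directions separately, comparing the control flow of Algorithm~\ref{alg:nobadchord} with the two conjuncts of $A_3(G)$. The one point needing care is the cyclic order on $\C$. Since $\e_1$ is by definition an entry node lying on $(\hat\x\overset{\C}{\pat}\x_1]$, and $\e^*$ itself lies in $\E$, the segment $(\hat\x\overset{\C}{\pat}\x_1]$ splits into two parts. One is $(\hat\x\overset{\C}{\pat}\e^*)$. The other is $[\e^*\overset{\C}{\pat}\x_1]$, which is contained in $[\e^*\overset{\C}{\pat}\hat\x]$. Hence the test at line~1 ($\e_1\in[\e^*\overset{\C}{\pat}\hat\x]$) is exactly the negation of the first conjunct $\e_1\in(\hat\x\overset{\C}{\pat}\e^*)$.

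\emph{Soundness (returns {\sc Vuln} $\Rightarrow A_3(G)$).} Suppose the algorithm returns {\sc Vuln}. Then line~1 did not return, so by the remark above $\e_1\in(\hat\x\overset{\C}{\pat}\e^*)$. Moreover, the visit from $\e_1$ in $G\setminus\{\e^*\}$ reached some node $v\in(\e^*\overset{\C}{\pat}\hat\x]$. The visit tree gives a simple path $\e_1\pat v$ avoiding $\e^*$, so we may take $a=v$, and $A_3(G)$ holds.

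\emph{Completeness ($A_3(G)\Rightarrow$ returns {\sc Vuln}).} Suppose $A_3(G)$ holds, witnessed by $a\in(\e^*\overset{\C}{\pat}\hat\x]$ and a path $P=\e_1\pat a$ in $G\setminus\{\e^*\}$. The first conjunct makes the line~1 test fail, so the visit is executed. No nodes are treated as sinks in this visit, so it reaches every node reachable from $\e_1$ in $G\setminus\{\e^*\}$, and in particular it reaches $a$. The visit may stop earlier, when it meets some other node of $(\e^*\overset{\C}{\pat}\hat\x]$, but it then returns {\sc Vuln} anyway. Therefore the algorithm returns {\sc Vuln}.

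The main obstacle is justifying that line~1 is exactly the complement of the first conjunct. This requires $\e_1\notin\{\hat\x\}$ and that the two parts described above are disjoint and cover $(\hat\x\overset{\C}{\pat}\x_1]$. Both follow from how $\E$ is numbered in Section~\ref{Definitions}, along $(\hat\x\overset{\C}{\pat}\x_1]$, together with $\e^*\in(\hat\x\overset{\C}{\pat}\x_1]$. The rest is standard reachability: a node is visited iff a path to it exists in the visited graph.
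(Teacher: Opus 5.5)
Your proof is correct and takes essentially the same approach as the paper: completeness because the visit from $\e_1$ in $G\setminus\{\e^*\}$ reaches $a$ (or stops earlier at another node of $(\e^*\overset{\C}{\pat}\hat\x]$ and returns {\sc Vuln} anyway), and soundness by backtracking along the visit tree to extract the required path. Your explicit check that the test at line~1 is exactly the negation of $\e_1\in(\hat\x\overset{\C}{\pat}\e^*)$, using $\e_1,\e^*\in(\hat\x\overset{\C}{\pat}\x_1]$, is a detail the paper leaves implicit.
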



\vspace{-2mm}       
\subsection{\texorpdfstring{Checking Conditions $A_2$, $A_5$ and $A_7$}{Checking Conditions A2, A5 and A7}}    

Algorithm~\ref{alg:nojumps} takes in input a graph $H$ and some of its nodes $a_0,\ldots,a_k$ that lie along the same simple path that ends in node $a$
and checks whether there is a path from some $a_i$ to $a$ that does not touch some other $a_j$. According to how we instantiate $H$, $a_0,\ldots,a_k$ and $a$, we will be able to check conditions $A_2$, $A_5$ and $A_7$.
Notice that the cost of the algorithm is $\mathcal{O}(m)$, since we do not delete the flag to the visited edges: we are, in fact, performing one single visit of $H$.

\begin{algorithm}[t]
    \caption{{\sl NoJumps}$(H,a_0,\ldots,a_k,a)$}
    \label{alg:nojumps}
    \begin{algorithmic}[1]
        \Require  $a_0,...,a_k$ are vertices of $H$ that lie along one simple path that ends in $a$
        \For{$i=0$ to $k-1$}
        \State Perform a visit from $a_i$ considering $a_{i+1}$ a sink and flag all visited edges
        \If{the visit reaches $a$} \Return {\sc Vuln}
        \EndIf
        \State When no edge can be visited, quit the visit but leave the flag on the edges
        \EndFor
        \State \Return {\sc NotVuln}
    \end{algorithmic}
    \end{algorithm}
    
\begin{proposition}\label{prop:A2andA5andA7}
 Let $H$ be a graph and $a_0,\ldots,a_k$ be $k+1$ nodes of $H$ that lie on a simple path that terminates in node $a$. 
 Then, {\sl NoJumps}$(H,a_0,\ldots,a_k,a)$ returns {\sc Vuln} if and only if
 $\exists a_i\ \exists P=a_i\pat a\ \exists j > i$ such that $a_j \notin P$.
\vspace{-2mm}       
 \end{proposition}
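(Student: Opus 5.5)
The plan is to prove the two directions separately, after reducing the existential condition to one that refers only to consecutive indices. The reduction I will prove first is:
$$\exists a_i\ \exists P=a_i\pat a\ \exists j>i \text{ with } a_j\notin P \quad\iff\quad \exists i<k\ \exists Q=a_i\pat a \text{ with } a_{i+1}\notin Q.$$
The right-to-left implication is immediate.

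For the left-to-right implication, take $P$ from $a_i$ avoiding $a_j$, and let $a_l$ be the last node of $P$ lying in $\{a_i,\ldots,a_{j-1}\}$. Such a node exists because $a_i\in P$. Set $Q=a_l\overset{P}{\pat}a$, which is simple because $P$ is simple. I claim $a_{l+1}\notin Q$, by two cases:
\begin{itemize}
\item if $l+1=j$, then $a_{l+1}=a_j\notin P$;
\item if $l+1<j$, then $a_{l+1}$ would be a later node of $P$ in $\{a_i,\ldots,a_{j-1}\}$, contradicting the choice of $a_l$. An occurrence of $a_{l+1}$ before $a_l$ on $P$ is not part of $Q$, so it does no harm.
\end{itemize}

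For the ``only if'' direction (soundness), suppose \textsl{NoJumps} returns {\sc Vuln} during iteration $i$. Then the visit from $a_i$ traversed a walk to $a$ that never leaves $a_{i+1}$, since $a_{i+1}$ is a sink. The flags only restrict which edges are explored, so this walk really exists in $H$. Extracting a simple path from it gives $Q=a_i\pat a$ with $a_{i+1}\notin Q$; the walk could touch $a_{i+1}$ only as its final node, and that would force $a=a_{i+1}$, a degenerate case I will exclude. By the reduction, the condition of the statement holds.

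For the ``if'' direction (completeness), the flags are the real issue: a visit does not re-explore edges already flagged by earlier visits, so it might seem that the visit from $a_i$ misses a path that an unflagged visit would find. I will prove by induction on $i$ the following invariant.

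\emph{Invariant.} If iterations $0,\ldots,i$ all fail, then every path to $a$ from a node reached in visits $0,\ldots,i$ passes through $a_{i+1}$.

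\emph{Base case ($i=0$).} Visit $0$ is an ordinary visit with $a_1$ as a sink. If some visited node $u$ had a path to $a$ avoiding $a_1$, the visit would reach $a$.

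\emph{Inductive step.} Let $u$ be a node reached in visit $i$, and let $R=u\pat a$ be a path avoiding $a_{i+1}$. Walk along $R$ from $u$. If $R$ never uses an edge flagged in an earlier visit, the current visit would reach $a$, contradicting failure. Otherwise, let $e$ be the first such edge. Its tail $w$ was reached in some visit $i'<i$. By the invariant at $i'$, the suffix of $R$ from $w$ passes through $a_{i'+1}$. That node is the start of visit $i'+1\le i$, so by the invariant at $i'+1$ the suffix of $R$ from $a_{i'+1}$ passes through $a_{i'+2}$. Chaining this argument up to index $i$ shows that $R$ contains $a_{i+1}$, a contradiction. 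The chaining uses only that $a_{i'+1},\ldots,a_i$ are themselves reached, as start nodes of their visits.

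Completeness then follows. Given $Q=a_i\pat a$ avoiding $a_{i+1}$, if all iterations up to $i$ failed, the invariant applied at $a_i$ (which is reached) contradicts the existence of $Q$. This chaining step is the main obstacle, and I will state it carefully as a separate lemma. The complexity claim is immediate, since each edge is flagged at most once.
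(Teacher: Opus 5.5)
\paragraph{Verdict.} Your plan follows the paper's route. You use the same invariant (every path to $a$ from an already visited node passes through $a_{i+1}$) contrapositively, and your consecutive-index reduction is a cleaner form of the paper's closing remark that all $a_j$ with $j>i$ are touched. You are also more explicit than the paper about flagged edges; its sketch only observes that visit $i+1$ resumes the branches stopped at $a_{i+1}$. However, your inductive step has a genuine gap: it is circular as written.

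\paragraph{The gap.} The chain you build from $w$ uses the invariants at $i', i'+1, \ldots$. Its last link reads: $a_i$ is the start of visit $i$, so by the invariant at $i$ the suffix of $R$ from $a_i$ contains $a_{i+1}$. That is exactly the instance $u=a_i$ of the statement you are proving. An inner induction on the length of $R$ does not repair this directly. If $u=a_i$ and the first edge of $R$ was already flagged, the suffix of $R$ from $a_i$ is $R$ itself.

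\paragraph{The repair.} What is missing is to use that $R$ is simple and that $a_i$ was the sink of visit $i-1$.
\begin{itemize}
\item Prove the case $u=a_i$ first. Let $e$ be flagged in visit $i'<i$, so its tail $w$ was reached in visit $i'$ and was not that visit's sink.
\item Chain only the invariants at $i',\ldots,i-1$, which are already available. This shows that $w\overset{R}{\pat}a$ meets $a_{i'+1},\ldots,a_i$ in this order, each strictly after the previous one because the $a_j$ are distinct.
\item Since $R$ is simple and starts at $a_i$, the node $a_i$ can only occur at position $0$. This forces $i'=i-1$ and $w=a_i$. But visit $i-1$ never crosses an edge leaving its sink $a_i$, which is a contradiction.
\item For a general $u$ reached in visit $i$, the same chain puts $a_i$ on $R$. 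Then $a_i\overset{R}{\pat}a$ contradicts the case just proved.
\item Nodes reached in earlier visits then follow by chaining.
\end{itemize}
With this repair, and with the degenerate case $a=a_{i+1}$ excluded as you note, your argument becomes a correct and more rigorous version of the paper's proof. Distinctness of the $a_j$, which follows from the hypothesis that they lie on a simple path, is used essentially here, so state it explicitly.
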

 
\noindent We can now use {\sl NoJumps} with proper inputs to check $A_2$, $A_5$ and $A_7$:
 \begin{itemize}
\vspace{-1mm}       
     \item For $A_2$, we invoke {\sl NoJumps}$(G,\x_1,\omega_1,...,\omega_k,t)$. Indeed, we observe that
 $\exists i,j$ such that $\exists P_i=\x_i\pat t$ in $G\setminus\{\omega_j\}$ if and only if $\exists j$ such that $\exists P=\x_1\pat t$ in $G\setminus\{\omega_j\}$:
the left-to-right implication holds because we can pass on $\C$ from $\x_1$ and reach any $\x_i$;
the other implication holds by definition (choose $i=1$).
     \item For $A_5$, we invoke {\sl NoJumps}$(G \setminus (\hat\xi \overset{\C}\pat \hat\xi),\hat \x,\gamma_1,...,\gamma_k,t)$.
    \vspace*{-1mm}
     \item For $A_7$, we invoke {\sl NoJumps}$(G\setminus(\hat \x\overset{\hat P}{\pat}t],\e^*,\x_1,...,\x_k,\hat \x)$, where $\x_k$ is the exit before $\hat\x$ (at most $k=1$, since we are assuming to have at least two exits).
 \vspace{-3mm}       
\end{itemize}

\begin{algorithm}[t]
    \caption{{\sl OrderedExits}$(G)$}
    \label{alg:orderedexits}
    \begin{algorithmic}[1]
        \Require In $G$ all nodes and edges are labelled as described in Section~\ref{Definitions}
        \State Assume that $|\Xi| = k$
        \For{each node $v$ of $\hat P$}
        \State Define $o[v]$ the position of $v$ in the path
        \EndFor
        \State {\sl ClosGamma} = $o[t']$
        \State {\sl ClosGammaProv} = $o[t']$
        \For{$i=1$ to $k$}
        \State Perform a visit from $\x_i$ by 
        \State\qquad (a) not crossing edges that are flagged, and
        \State\qquad (b) flagging all the newly crossed edges, and
        \State \qquad (c) considering $\hat P$ and $\C \setminus \{\xi_i\}$ as sinks
        
        \ForAll{$v$ touched in the visit}
        \If{$v$ is not marked and $v \notin \hat P \cup \C$} mark $v$ with $i$           
        \EndIf
        \If{$v$ is marked with $h < i$} mark $v$ with $(h,i)$
        \EndIf
        \If{$v\in\hat P$}
        \If{$o[v] > $ {\sl ClosGamma}} \Return {\sc Vuln} /* $A_6(G)${\bf (I)} holds */ \label{line:NotOrdered}
        \EndIf
        \If{$o[v]< $ {\sl ClosGammaProv}} {\sl ClosGammaProv} = $o[v]$
        \EndIf
        \EndIf
        \EndFor
        \State At the end of the visit, do not remove the flags on the crossed edges
        \State {\sl ClosGamma} = {\sl ClosGammaProv}
        \EndFor
    \State Cancel all the flags on edges
    \ForAll{$v \in \hat P$}
    \State Perform a backwards visit from $v$ by
        \State\qquad (a) not crossing edges that are flagged, and
        \State\qquad (b) flagging all the newly crossed edges, and
        \State \qquad (c) considering $\hat P \setminus \{v\}$ and $\C$ as sinks
        \If{find a node marked with a pair} 
        \Return {\sc Vuln} /* $A_6(G)${\bf (II)} holds */ \label{line:NotDisjoint}
        \EndIf
        \State At the end of the visit, do not remove the flags on the crossed edges
        \EndFor
        \State \Return {\sc NotVuln}
    \end{algorithmic}
\end{algorithm}

\vspace{-2mm}       
\subsection{\texorpdfstring{Checking Condition $A_6$}{Checking Condition A6}}
\label{sec:checkingA6}
\vspace{-1mm}

Algorithm~\ref{alg:orderedexits} first assigns to each node in $\hat P$ its order along that path; this is needed to check condition {\bf (I)} of $A_6$ in constant time. 
Then, we start a visit from each of the $k$ exits; in the visit from the generic $\x_i$, we keep track of the visited node of $\hat P$ which is closer to $\hat\x$, if any.
If it exists, this node is one of the $\gamma_i$'s and we record it (actually, its order along $\hat P$) in variable {\sl ClosGamma}; 
in this way, we know that there is a simple path from $\x_i$ to that node, internally disjoint from $\hat P$ (being the nodes of $\hat P$ sinks of the visit).
Since we want that such paths are disjoint also from $\C$, we also consider all the nodes of $\C$, except for $\xi_i$, as sinks.
For every node not belonging to $\hat P \cup \C$ that is touched for the first time in the current visit, we mark it with $i$; for every node that has already been touched in one previous visit (say $h$), we update its mark with the pair $(h,i)$; for every node which has been touched more than once, the mark does not change. Indeed, to check $A_6$ we should be able to ascertain whether there are (at least) two different exits with intersecting simple paths that lead to $\hat P$.

Whenever we reach a node of $\hat P$, we check if that node appears along $\hat P$ after the $\gamma$ with minimum order $o$ calculated in the previous iterations (as recorded in {\sl ClosGamma}); 
if this is the case, then condition {\bf (I)} of $A_6$ is met and the algorithm returns {\sc Vuln}; otherwise, we need to upload the minimum. Since we can reach different nodes of $\hat P$ during the visit from $\xi_i$ and we need to compare their order with the order of the minimum from the previous iterations, we save the (order of the) provisory minimum of the iteration in {\sl ClosGammaProv} and upload {\sl ClosGamma} only at the end of the iteration. 
%
When the visit from $\xi_i$ is completed, since we want a linear-time algorithm, we keep the flagging on the crossed edges and start a new visit from the following exit.

\newpage
After the {\bf for}-loop, the nodes marked with pairs are the candidates to satisfy condition {\bf (II)} of $A_6$; however, not all of them can have paths leading to some $\gamma \in \hat P$. To see this, consider the following example:

\begin{tikzpicture}[scale=0.90,every node/.style={circle,draw, minimum size=4mm, inner sep=1pt}]
\node (s) at (0,0) {$s$};
\node (es) at (1,0) {$\e^*$};
\node (xi) at (2,0) {$\x_i$};
\node (xj) at (4,0) {$\x_j$};
\node [label=left:{\small $(i,j)$}] (c) at (3,1) {$c$};
\node (up) at (3,2) {};
\node [label=right:{\small $(i,j)$}] (b) at (7,2) {$b$};
\node (xt) at (5,0) {$\hat \x$};
\node (g) at (8,0) {$\gamma$};
\node (t) at (9,0) {$t$};

\draw [->, orange] (s) -- (es);
\draw[->,red] (es) to (xi);
\draw[->, red] (xi) to (xj);
\draw[->, red] (xj) to (xt);
\draw[->,blue] (xt) -- (g);
\draw [->,red,bend right =40] (xt) to (up);
\draw [->,red,bend right =40] (up) to (es);
\draw[->,blue] (g) -- (t);
\draw [->,green, bend left=15] (xi) to (c);
\draw [->,green] (c) to (up);
\draw [->,green, bend right=10] (xi) to (b);
\draw [->,green] (b) to (g);
\draw [->,gray,bend right =15] (xj) to (c);
\draw [->,gray,bend left =35] (xj) to (b);
\end{tikzpicture}

\noindent
In the visit from $\x_i$, the algorithm follows the green edges and marks the nodes not belonging to $\hat P \cup \C$ with $i$; in particular, nodes $b$ and $c$ are marked with $i$. Then, in the visit from $\x_j$, it crosses the grey edges and marks with $j$; in particular, $b$ and $c$ are now marked with the pair $(i,j)$. 
However, $b$ is a point where two simple paths $\x_i\pat\gamma$ and $\x_j\pat\gamma$ (both internally disjoint from $\hat P$ and $\C$) intersect, whereas $c$ is not.
Hence, the aim of the final backward visit performed by the algorithm starting from all the nodes of $\hat P$ is exactly to identify any node marked with a pair that has a path to some $\gamma \in \hat P$, if any.
In the previous example, the algorithm finds, during the backward visit from $\gamma$, only $b$, and so it correctly returns {\sc Vuln} in line \ref{line:NotDisjoint}; in contrast it (correctly) does not find $c$.
Also in this backward visit, we avoid crossing the same edge more than once, by keeping the flag on the edges already crossed.
This ensures that the overall cost of the algorithm is $O(m)$, as it consists in two visits of the graph.
\vspace{-1mm}

\begin{proposition}\label{prop:A6}
 {\sl OrderedExits}$(G)$ returns {\sc Vuln} if and only if $A_6(G)$.
 \vspace{-2mm}       
\end{proposition}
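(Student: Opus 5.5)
We prove the two implications separately, splitting each into the two disjuncts (I) and (II) of $A_6(G)$, which correspond to the two {\sc Vuln} exits of the algorithm (lines~\ref{line:NotOrdered} and~\ref{line:NotDisjoint}). The basic tool is a standard invariant of visits with persistent edge flags. After the first {\bf for}-loop has processed $\x_1,\ldots,\x_i$, a node $v\notin\hat P\cup\C$ is marked with some $h\le i$ if and only if some path $\x_h\pat v$ exists whose internal nodes avoid $\C\cup\hat P$. This holds because an edge left flagged by an earlier visit only leads to nodes that were already reached, so skipping it loses no reachability from the union of the exits. The same reasoning applied to $\hat P$ shows that {\sl ClosGamma}, at the start of iteration $i$, equals the minimum order of a node of $\hat P$ that can be reached from some $\x_h$ with $h<i$ by a path internally avoiding $\C\cup\hat P$. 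We will have to argue separately that a node first reached only through flagged edges was already reached earlier (from a smaller index), which is exactly what the ``minimum so far'' comparison needs.

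\emph{Soundness.} Suppose line~\ref{line:NotOrdered} fires during the visit from $\x_j$ at a node $\gamma'\in\hat P$ with $o[\gamma']>{\sl ClosGamma}$. By the invariant, some $\x_i$ with $i<j$ reaches a node $\gamma$ of smaller order through a path internally avoiding $\C\cup\hat P$. Since $\C$ and $\hat P$ are sinks, the visit from $\x_j$ gives a path $P_j=\x_j\pat\gamma'$ with the same property, so (I) holds. Suppose instead line~\ref{line:NotDisjoint} fires at a node $c$ marked $(h,i)$, found by a backward visit from some $v\in\hat P$. Then $c$ has a path to $v$ internally avoiding $\C\cup\hat P$, and by the invariant there are paths $\x_h\pat c$ and $\x_i\pat c$ avoiding $\C\cup\hat P$. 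Concatenating gives walks, which we shorten to simple paths $P_h=\x_h\pat c\pat v$ and $P_i=\x_i\pat c\pat v$. Both contain $c$ as an internal node, so (II) holds. One caveat: flags from earlier backward visits may stop us from re-crossing edges. We handle this by noting that any pair-marked node backward-reachable through a flagged edge would already have been detected by the earlier visit that flagged that edge.

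\emph{Completeness.} Assume (I) holds with witnesses $i<j$ and $\gamma<_{\hat P}\gamma'$. When the visit from $\x_j$ is processed, {\sl ClosGamma} is at most $o[\gamma]$. The visit from $\x_j$ either reaches $\gamma'$, which makes the test fire, or is blocked because part of $P_j$ was already flagged. In the blocked case, the flagged portion of $P_j$ was first explored from some earlier exit $\x_h$, $h<j$, which then also reaches $\gamma'$. We must therefore reason about the smallest index whose visit reaches a node of order greater than the current minimum; this may be an earlier pair, and some pair is always detected. Now assume (II) holds, and take $c\in\mathring P_i\cap\mathring P_j$. The node $c$ is touched by the visits (or earlier visits) from both exits, so it ends up marked with a pair, possibly with different indices; any pair suffices. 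Moreover, $c$ reaches $\gamma$ along $P_i$ avoiding $\C\cup\hat P$, so the backward visits from the nodes of $\hat P$ reach $c$ or an equally good pair-marked node along that path.

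The main obstacle is the effect of persistent flags in the completeness direction for (I): we need a careful minimal-counterexample argument showing that blocking by flags never hides a violation of the ordering, only relocates it to an earlier exit.
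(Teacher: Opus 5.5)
Your forward invariant (union reachability survives the persistent flags) is correct, and so is your soundness argument for (I). The gap is in the completeness direction, exactly where the flags matter: it rests on two claims that are false.

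\emph{First false claim (case (II)).} You assert that a node $c\in\mathring P_i\cap\mathring P_j$ ends up marked with a pair. The paper's footnote example refutes this. Take edges $\x_i\to a$, $a\to b$, $b\to a$, $a\to\gamma$, $\x_j\to b$. The visit from $\x_j$ marks $b$ with $(i,j)$ and stops, because $b\to a$ is already flagged. So the common node $a$ keeps the single mark $i$.

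\emph{Second false claim (case (I)).} You hope to show that blocking ``only relocates'' the ordering violation to an earlier exit, so that line~\ref{line:NotOrdered} eventually fires. Take $\C=\e^*\to\x_1\to\x_2\to\x_3\to w\to\hat\x\to\e^*$, $\hat P=\hat\x\to\gamma\to\gamma'\to t$, and a node $b$ with edges $\x_1\to b$, $b\to\gamma'$, $\x_2\to\gamma$, $\x_3\to b$. Condition (I) holds for $\x_2,\x_3$. However, the visit from $\x_3$ marks $b$ with $(1,3)$ and stops, since $b\to\gamma'$ was flagged by the visit from $\x_1$; it touches no node of $\hat P$. Line~\ref{line:NotOrdered} never fires, because the exit $\x_1$ to which the violation is ``relocated'' precedes $\x_2$. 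The algorithm answers {\sc Vuln} only at line~\ref{line:NotDisjoint}, through $b$.

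\emph{Missing idea.} A blocked forward visit always leaves a pair-marked node that reaches $\hat P$, so completeness of (I) must go through the second phase. Let $u$ be the first internal node of $P_j$ touched by a visit before the $j$-th.
\begin{itemize}
\item No edge of $\x_j\overset{P_j}{\pat}u$ can be flagged before iteration $j$: its tail is either $\x_j$, which is a sink in all earlier visits, or a node not yet touched.
\item Hence the visit from $\x_j$ reaches $u$ and turns its mark into a pair.
\item Since $u\overset{P_j}{\pat}\gamma'$ internally avoids $\C\cup\hat P$, the union invariant applied to the backward visits guarantees that the second phase meets a pair-marked node and returns {\sc Vuln}. This is where your remark on backward flags is needed, not in soundness.
\item If no such $u$ exists, the visit from $\x_j$ follows all of $P_j$ and touches $\gamma'$, so line~\ref{line:NotOrdered} fires in case (I).
\item In case (II), $u$ always exists, because $c$ was touched no later than the visit from $\x_i$.
\end{itemize}
This is the argument the paper gives for (II): the visit from $\x_j$ ``stopped in a node $b$ preceding $a$ and marked with a pair.'' The paper's own (I) argument tacitly assumes that the visit from $\x_j$ reaches $\gamma'$, and it needs the same patch.

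\emph{A smaller error in your soundness proof for (II).} Shortening the walks to simple paths can delete $c$. In the footnote example, $\x_i\to a\to b\to a\to\gamma$ shortens to $\x_i\to a\to\gamma$, which misses the pair-marked node $b$. Instead, with $R=c\pat v$, let $d$ and $d'$ be the first nodes of $\x_h\pat c$ and $\x_i\pat c$ lying on $R$. The later of $d,d'$ along $R$ is internal to both simple paths $\x_h\pat d\overset{R}{\pat}v$ and $\x_i\pat d'\overset{R}{\pat}v$. This is what the paper's choice of the node $a$ accomplishes.
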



 \begin{algorithm}[t]
    \caption{{\sl FindACycle}$(G)$}
    \begin{algorithmic}[1]
        \State Find all the strong connected components ($SCCs$) of $G$
        \If{all $SCCs$ are trivial (the graph is acyclic)}
        \Return $acyclic$
        \EndIf
        \State Perform a $\mathit{BFS}$ from $s$ to find a closest vertex 
        to $s$ that belongs to a non-trivial $SCC$; label it as $\e^*$
        \State Label all the nodes and edges of the shortest path from $s$ to $\e^*$ with $P^*$
        \State Find a cycle containing $\e^*$ and label all its nodes and edges with $\C$
        \State\Return $cyclic$
    \end{algorithmic}
    \label{alg:FindtheCycle}
\end{algorithm}

\section{A Linear-time Amortized Decremental Algorithm}
\label{sec:amortized}
\vspace{-2mm}       

To get a linear-time amortized algorithm, we 
provide suitable versions of functions {\sl FindACycle} and {\sl Search$\mathcal{W}$}, called in Algorithm~\ref{alg:cgs} of Section~\ref{sec:decAlg}. 
As in \cite{CGS19}, it is essential that the cycle choosen by {\sl FindACycle} is (one of) the cycle(s) closest to $s$: 
this is what Algorithm \ref{alg:FindtheCycle} does, by following the  decomposition of a digraph in its strongly connected components (SCCs).
It works in $\mathcal{O}(m)$  
since finding SCC is linear \cite{RT71} and the other operations are just visits. 

Once we found $\C$ (and, correspondingly, $\e^*$ and $P^*$), we invoke the functions developed in Section~\ref{sec:algos} to label the graph and analyze the cycle; this is done by function {\sl Search$\mathcal{W}$} in Algorithm \ref{alg:CycleAnalysis}, by using Algorithms \ref{alg:Findthenodes}, \ref{alg:nobadchord}, \ref{alg:nojumps} and \ref{alg:orderedexits}. Since all the algorithms invoked cost $\mathcal{O}(m)$, the cost of the algorithm itself is $\mathcal{O}(m)$.

\begin{theorem}
If Algorithm \ref{alg:CycleAnalysis} returns {\sc Vuln}, then the net is vulnerable.
If Algorithm \ref{alg:CycleAnalysis} returns $\hat e$, then $\hat e$ is redundant.
\vspace{-1mm}
\end{theorem}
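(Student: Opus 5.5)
We do not see Algorithm~\ref{alg:CycleAnalysis} itself, but from the text it is Search$\mathcal{W}$. It first calls {\sl FindNodes}$(G)$. It then calls {\sl NoBadChord}$(G)$, {\sl NoJumps} with the three instantiations listed for $A_2$, $A_5$ and $A_7$, and {\sl OrderedExits}$(G)$. As soon as any of these returns {\sc Vuln}, the algorithm returns {\sc Vuln}. It returns $\hat e$ either when {\sl FindNodes} detects a single exit, or when all checks return {\sc NotVuln}. The proof is a case analysis over the return points, combining the ``if and only if'' characterizations of Section~\ref{sec:algos} with the sufficient conditions of Section~\ref{sec:chars} and Theorem~\ref{th:Main}.

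\textbf{First claim ({\sc Vuln} is sound).} I treat each return point separately.
\begin{itemize}
\item If {\sl FindNodes} returns {\sc Vuln}, then $A_1(G)\lor A_4(G)$ holds by Proposition~\ref{prop:A1andA4}, and vulnerability follows from Proposition~\ref{prop:noxi} or Proposition~\ref{prop:noyi}.
\item If {\sl NoBadChord} returns {\sc Vuln}, then Proposition~\ref{prop:A3} gives $A_3(G)$, and Proposition~\ref{prop:nobadchord} gives vulnerability.
\item If {\sl OrderedExits} returns {\sc Vuln}, then Proposition~\ref{prop:A6} gives $A_6(G)$, and Proposition~\ref{prop:orderedexits} gives vulnerability.
\item If a {\sl NoJumps} call returns {\sc Vuln}, Proposition~\ref{prop:A2andA5andA7} yields some $a_i$, some $j>i$, and a path $a_i\pat a$ avoiding $a_j$ in the graph $H$ of that call. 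This has to be translated into $A_2$, $A_5$ or $A_7$:
  \begin{itemize}
  \item For $A_2$: if $a_i=\x_1$, the path lies in $G\setminus\{\omega_j\}$ directly. If $a_i=\omega_h$ with $h<j$, prefix it with a path from $\x_1$ to $\omega_h$ avoiding $\omega_j$. This uses the equivalence already stated after Proposition~\ref{prop:A2andA5andA7}.
  \item For $A_5$ and $A_7$: the nodes $a_0,\dots,a_k$ are ordered along $\hat P$ (respectively along $\C$). So a path from $a_i$ that avoids $a_j$ can be prefixed by the corresponding segment of $\hat P$ (respectively $\C$) from $a_0$ to $a_i$. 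That segment does not contain $a_j$ since $j>i$. The resulting walk is then shortcut to a simple path, which yields the required path in $G\setminus\{\gamma_j\}$ avoiding $\C$, or in $G\setminus\{\x_j\}\setminus(\hat\x\overset{\hat P}{\pat}t]$.
  \end{itemize}
  Then Propositions~\ref{prop:onlydirectentry}(1), \ref{prop:jumponpt} and~\ref{prop:noexitjumponC} give vulnerability.
\end{itemize}

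\textbf{Second claim ($\hat e$ is redundant).} There are two cases.
\begin{itemize}
\item If $\hat e$ is returned because $|\Xi|=1$, Remark~\ref{rem:oneExit} gives redundancy.
\item Otherwise every check returned {\sc NotVuln}. By the converse directions of Propositions~\ref{prop:A1andA4}, \ref{prop:A3}, \ref{prop:A6} and~\ref{prop:A2andA5andA7}, together with the instantiation equivalences, none of $A_1,\dots,A_7$ holds. Theorem~\ref{th:Main} then gives that $\hat e$ is redundant.
\end{itemize}
This direction requires the instantiations to be exact characterizations, not just sufficient ones. For $A_5$ and $A_7$, the converse is immediate by taking $i=0$ in Proposition~\ref{prop:A2andA5andA7}. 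The graph restrictions also match: for $A_5$, removing $\C$ except $\hat\x$ corresponds to $\mathring P\cap\C=\emptyset$.

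\textbf{Main obstacle.} The delicate step is verifying that the three {\sl NoJumps} instantiations characterize $A_2$, $A_5$ and $A_7$ exactly. The shortcutting argument must keep the concatenated path inside the deleted subgraph and preserve the avoided node. There is also a subtlety about whether $\hat\x$ itself and the endpoints are excluded or included in $H$. I would settle these first, and treat the other cases as direct citations.
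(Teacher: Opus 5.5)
Your proposal is correct and follows essentially the same route as the paper. The paper also splits on the return points of {\sl Search}$\mathcal{W}$, uses the ``if and only if'' Propositions~\ref{prop:A1andA4}, \ref{prop:A3}, \ref{prop:A2andA5andA7} and~\ref{prop:A6} to translate each outcome into the presence or absence of the conditions $A_1,\ldots,A_7$, and then concludes via Propositions~\ref{prop:noxi}--\ref{prop:noexitjumponC}, Remark~\ref{rem:oneExit} and Theorem~\ref{th:Main}. Your prefix-and-shortcut argument, which turns a {\sl NoJumps} witness starting at some $a_i$ with $i>0$ into a witness for $A_2$, $A_5$ or $A_7$, is sound and fills in a detail the paper leaves implicit in its instantiation remarks.
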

\begin{proof}
If the algorithm returns {\sc Vuln}, then one among  Algorithms~\ref{alg:Findthenodes}, \ref{alg:nobadchord}, \ref{alg:nojumps} or \ref{alg:orderedexits} returned {\sc Vuln}. By Propositions~\ref{prop:A1andA4}, \ref{prop:A3}, \ref{prop:A2andA5andA7} and \ref{prop:A6}, there exists some $i \in \{1,\ldots,7\}$ s.t. $A_i(G)$ holds; thus, $G$ is vulnerable, by one among Propositions~\ref{prop:noxi}, \ref{prop:onlydirectentry}(1), \ref{prop:nobadchord}, \ref{prop:noyi}, \ref{prop:jumponpt}, \ref{prop:orderedexits} and \ref{prop:noexitjumponC}.
Conversely, 
if the algorithm returns $\hat e$, then either there is just one exit, and so by Remark~\ref{rem:oneExit} $\hat e$ is redundant, or none of Algorithms~\ref{alg:Findthenodes}, \ref{alg:nobadchord}, \ref{alg:nojumps} or \ref{alg:orderedexits} returned {\sc Vuln}. By Propositions~\ref{prop:A1andA4}, \ref{prop:A3}, \ref{prop:A2andA5andA7} and \ref{prop:A6}, we know that $A_i(G)$ does not hold, for any $i \in \{1,\ldots,7\}$. By Theorem \ref{th:Main}, $\hat e$ is redundant.
\vspace{-1mm}
\end{proof}

\begin{algorithm}[t]
    \caption{{\sl Search}$\mathcal{W}
    (G)$}
    \begin{algorithmic}[1]
        \Require $\e^*$ and all the nodes and edges of $\C$ and $P^*$ are labelled in $G$
        \If{{\sl FindNodes}$(G)$ returns {\sc Vuln}} \Return {\sc Vuln}
        \EndIf
        \If {it returns $\hat e$}
        \Return {$\hat e$}
        \EndIf
        \State \textbf{if} any of
         \State \qquad {\sl NoBadChord}$(G)$ 
         \State \qquad {\sl NoJumps}$(G,\x_1,\omega_1,...,\omega_k,t)$
         \State \qquad {\sl NoJumps}$(G\setminus (\hat\xi \overset{\C}\pat \hat\xi),\hat \x,\gamma_1,...,\gamma_k,t)$
         \State \qquad {\sl NoJumps}$(G\setminus(\hat \x\overset{\hat P}{\pat}t],\e^*,\x_1,...,\x_k,\hat \x)$
         \State \qquad {\sl OrderedExits}$(G)$
         \State \quad returns {\sc Vuln} 
         \State \textbf{then} \Return {\sc Vuln}
        \State \textbf{else} \Return $\hat e$
            
    \end{algorithmic}
    \label{alg:CycleAnalysis}
\end{algorithm}

As described in Section~\ref{sec:decAlg}, function {\sl isVuln} in 
Algorithm~\ref{alg:cgs} is a decremental algorithm, just calling it on the graph $G\setminus \{e\}$, being $e$ the edge to be removed, while 
Algorithm~\ref{alg:dec} 
summarizes how to keep information about redundant edges already discovered by function {\sl isVuln} in a sequence of edge removals.
Function {\em isVuln} is invoked $k\in\mathcal{O}(m)$ times; 
the total cost of such 
invocations is $\mathcal{O}(m^2)$, because it consists in at most $\mathcal{O}(m)$  invocations of Algorithms~\ref{alg:FindtheCycle} and~\ref{alg:CycleAnalysis}. 
Hence, the decremental test for vulnerability has an amortized complexity of $\mathcal{O}(m)$ for every removed edge, by assuming $k\in\mathcal{O}(m)$ edge removals.
\vspace{-2mm}

\section{Conclusions}
\label{sec:concl}
\vspace{-2mm}       

In this paper, we have improved the ‘‘hunting $\mathcal{W}$ in cycles'' strategy underlying the algorithm in \cite{CGS19} and shown that it can work in $\mathcal{O}(m^2)$ (instead of $\mathcal{O}(nm^2)$). Moreover, we have shown that, irrespectively of which of the two versions one wants to use, the approach can be used to develop a decremental algorithm, able to ascertain the vulnerability condition upon edges removals. 

This paper complements a companion paper \cite{fgs26a}, 
where an incremental algorithm is presented; these can be used when the considered net evolves in time by removing/adding connections among nodes.
Interestingly, the incremental algorithm in \cite{fgs26a} 
and the decremental algorithm presented here are in some sense dual:
the incremental one is based on the series-parallel irredundant non-vulnerable part of the graph and takes advantage of the algorithm in \cite{MS23};
the decremental one is based on hunting redundant edges and a cycle analysis along the lines of \cite{CGS19}.
In both cases, the static algorithms are $\mathcal{O}(m^2)$, whereas the dynamic algorithms are $\mathcal{O}(m)$ amortized for each edge insertion/removal.

The obvious direction for future research is the development of a fully dynamic approach, where edge additions and removals may interleave in an arbitrary way. As usual in the setting of dynamic algorithms, this is the most challenging problem. In our case, vulnerability is preserved under edge additions and non-vulnerability is preserved under edge removals; such a ‘monotonicity' is lost if we interleave additions and removals.


%
%
%
%

\bibliography{threeForms,chan}


\appendix

\section{Traffic Networks, Braess Paradox and Vulnerability}
\label{app:braess}

The set of $st$-paths in $G$ is denoted by $\PP(G)$; 
the set of 
simple $st$-paths is denoted by $\SP(G)$.

A \emph{flow} for a net $(G,s,t)$ is a function 
$\varphi:\SP(G) \rightarrow \mathbb{R}^+$.
The value of a flow is the sum of the values sent over all paths. 
A flow induces a unique flow $\varphi(e)$ on edges: for any edge $e\in E$, 
$\varphi(e)=\sum_{P\in \SP(G): e\in P}\varphi(P)$. 

A {\em latency function} $l_e:\mathbb{R}^+\rightarrow \mathbb{R}^+$ 
assigns to each edge $e$ a latency as a function of the flow in it;
as usual, we only consider continuous and non-decreasing latency functions.

The latency of an $st$-path $P$ under a flow $\varphi$ is the sum of the latencies of 
all edges in the path under $\varphi$, i.e., $l_P(\varphi)=\sum_{e\in P} l_e(\varphi(e))$.
If $H$ is a subgraph of $G$, we denote with 
$l|_H$ the restriction of the latency function $l$ on the edges of $H$.

Given a net $G$, a real number $r\in\mathbb{R}^+$ and a latency function $l$, 
we call the triple $(G, r, l)$ an {\em instance} (of $G$). A flow $\varphi$ is {\em feasible} for 
$(G, r, l)$ if the value of $\varphi$ is $r$. Notice that, since we do not have any
constraint on edges or vertices, every $r$ admits at least one feasible flow.

A feasible flow $\varphi$ for $(G, r, l)$ is at {\em Wardrop equilibrium} 
(or is a {\em Wardrop flow}) if, for all $st$-paths $P$ and $Q$  
such that $\varphi(P)>0$, we have $l_P(\varphi)\leq l_Q(\varphi)$.
In particular, this implies that, if $\varphi$ is a Wardrop flow, 
all $st$-paths to which $\varphi$ assigns a positive flow
have the same latency. It is known \cite{Rough06} that every
instance admits a Wardrop flow and that different Wardrop flows for the same
instance have the same latency along all $st$-paths with a positive flow. Thus,
we denote with $L(G, r, l)$ the latency of all $st$-paths 
with positive flow at Wardrop equilibrium; whenever $r = 0$,
we let $L(G, r, l)$ be 0.

{\em Braess paradox} \cite{braessFormal,braessOriginal} originates when latency 
at Wardrop equilibrium decreases because of removing some edges (or equivalently, by increasing 
latency functions on them): an instance $(G, r, l)$ suffers from Braess paradox 
if a subgraph of $G$ exists with a lower latency.
In \cite{Rough06} it is shown that 
it is \NP-hard to decide whether or not an instance of a net  
suffers from the Braess paradox. A question raised in \cite{Rough06} is to study
the problem from a graph-theoretical perspective, 
that is whether a net admits instances suffering from 
the paradox.
This leads to the following definition \cite{Rough06}.
\begin{definition}
	A net $G$ is {\em vulnerable} if there exist a value $r$, a latency function
	$l$ and a subgraph $H$ of $G$ such that $L(G, r, l)>L(H, r, l|_H)$.
\end{definition}

To conclude, we also report, for the sake of completeness, the definition of (two-terminal) series-parallel graphs.

\begin{definition}[Two-Terminal Series-Parallel Graphs (TTSP) \cite{Duf65}]
A graph $G=(V,E)$, with two distinguished vertices $s$ and $t$, is a {\em TTSP-graph} if it can be turned into the single edge $(s,t)$ by a sequence of the following operations:
\begin{itemize}
\item Replace a pair of parallel edges with a single edge that connects their common endpoints;
\item Replace a pair of edges $(u,w)$ and $(w,v)$ with the edge $(u,v)$, whenever $w$ has degree 2 and it is different from $s$ and $t$.
\end{itemize}
\end{definition}

\section{Omitted Proofs from Section~\ref{sec:algos}}
\label{app:proofs}

\begin{proof}[Proof of Proposition~\ref{prop:A1andA4}]
It suffices to observe that $A_1(G)$ means that there is a path $P=s\pat x$, with $x\in(\x_1\overset{\C}{\pat}\hat{\x}\overset{\hat P}{\pat}t')$ and $\mathring P\cap (\C\cup\hat P)=\emptyset$; this holds if and only if the visit from $s$ on line 6 reaches a node which belongs to $(\x_1\overset{\C}{\pat}\hat{\x}\overset{\hat P}{\pat}t')$. 
Similarly, $A_4(G)$ means that there is a path $P=\x_1\pat y$, with $y\in(t'\overset{\hat P}{\pat}t]$ and $\mathring P\cap \hat P=\emptyset$; this holds if and only if the visit from $\x_1$ on line 10 reaches a node which belongs to $(t'\overset{\hat P}{\pat}t]$. 
\end{proof}

\begin{proof}[Proof of Proposition~\ref{prop:A3}]
($\Leftarrow$)
$A_3(G)$ means that $\e_1 \in (\hat\xi \pat \e^*)$ and $\exists a\in(\e^*\overset{\C}{\pat}\hat\x]$ s.t. $\exists P=\e_1\pat a$ in $G\setminus\{\e^*\}$. 
Then, the algorithm reaches a node in $(\e^*\overset{\C}{\pat}\hat\x]$ with a visit in $G\setminus\{\e^*\}$ started from $\e_1$ and returns {\sc Vuln} (line 3).
 
($\Rightarrow$) If the algorithm returns {\sc Vuln}, then $\e_1\not\in [\e^*\overset{\C}{\pat}\hat\x]$ and
the visit is  performed; the path $P$ required by $A_3$ is then composed by all the nodes and edges found backtracking from the node in $(\e^*\overset{\C}{\pat}\hat\x]$ to $\e_1$.
\end{proof}

 \begin{proof}[Proof of Proposition~\ref{prop:A2andA5andA7}]
     $(\Rightarrow)$ If the algorithm answers {\sc Vuln}, then the visit starting from some $a_i$ reached $a$ without touching $a_{i+1}$.
     
     $(\Leftarrow)$ We prove the contrapositive. During the $i^{th}$ iteration of the {\bf for}-loop, the following invariant holds: 
\textit{Each simple path that starts from a visited node and that in $H$ can reach $a$ passes in $a_{i+1}$}. 
This is true since the only possible sink of iteration $i$ is $a_{i+1}$ (since we are assuming that the algorithm returns {\sc NotVuln}) and the iteration $i+1$ starts from $a_{i+1}$ (the previous sink), continuing all the branches of the visit previously stopped there. 
So, at the $k^{th}$ iteration, all the simple paths starting from each $a_i$ that could reach $a$ in $H$ pass in $a_k$. 
This means that all the simple paths starting from any $a_i$ touches all the $a_j$ with $j>i$.
\end{proof}

\begin{proof}[Proof of Proposition~\ref{prop:A6}]
    $(\Rightarrow)$
    If the algorithm returns {\sc Vuln} in line~\ref{line:NotOrdered}, then it has found two disjoint paths, from two different exits (say, $\xi_i$ and $\xi_j$, with $i<j$), that land to different nodes of $\hat P$ (say, $\gamma$ and $\gamma'$, respectively) such that $o[\gamma] < o[\gamma']$; this means that $\gamma <_{\hat P} \gamma'$ and so condition {\bf (I)} of $A_6$ holds. Notice that the algorithm updates $ClosGamma$ only at the end of each iteration; so, it is not possible that it answers {\sc Vuln} for two paths from the same $\x_i$.

Otherwise, the algorithm must return {\sc Vuln} in line 25; then, there exists some $b$ marked with a pair $(i,j)$ (with $i < j$) backward reachable from some $\gamma \in \hat P$. Hence, there exist two paths $P_i'=\x_i\pat b$  
and $P_j'=\x_j\pat b$ such that all internal nodes of $P_j'$ are marked with $j$ but not with $i$; so, $P_i'$ and $P_j'$ are internally disjoint. 
Since $b$ is marked with $i$, there exists a $P_{\gamma}=b\pat\gamma$ with all nodes marked with a pair that has at least one marker that is $i$ or less. 
Then, the path $P'=\x_j\overset{P_j}{\pat}b\overset{P_{\gamma}}{\pat}\gamma$ is simple, because of the structure of the markers. 
Let $a$ be the first node of $P_{\gamma}$ that is in a simple path $P''=\x_i\pat a\pat \gamma$ (at least one exists, possibly $a=b$ --- see the graph depicted in Section~\ref{sec:checkingA6}). 
Then, the two paths $P_i=\x_i\overset{P''}{\pat}a\overset{P_{\gamma}}{\pat}\gamma$ and $P_j=\x_j\overset{P_j}{\pat}b\overset{P_{\gamma}}{\pat}\gamma$ are both simple and internally intersect in $a$; thus, condition {\bf (II)} of $A_6$ holds.\footnote{
A minimal example where $a \neq b$ is depicted below:

\begin{tikzpicture}[scale=0.95,every node/.style={circle,draw, minimum size=4mm, inner sep=1pt}]
\node (s) at (0,0) {$s$};
\node (es) at (1,0) {$\e^*$};
\node (xi) at (2,0) {$\x_i$};
\node (xj) at (4,0) {$\x_j$};
\node [label=right:{\small \color{pink} $(i,j)$}] (b) at (7,2) {$b$};
\node [label=right:{\small \color{pink} $i$}](a) at (7.5,1.3) {$a$};
\node (xt) at (5,0) {$\hat \x$};
\node (g) at (8,0) {$\gamma$};
\node (t) at (9,0) {$t$};

\draw [->, orange] (s) -- (es);
\draw[->,red] (es) to (xi);
\draw[->, red] (xi) to (xj);
\draw[->, red] (xj) to (xt);
\draw[->,blue] (xt) -- (g);
\draw [->,red,bend right =40] (xt) to (es);
\draw[->,blue] (g) -- (t);
\draw [->,green, bend left=10] (xi) to (a);
\draw [->,green, bend right =20] (a) to (b);
\draw [->,green, bend right =20] (b) to (a);
\draw [->,green] (a) to (g);
\draw [->,gray,bend left =35] (xj) to (b);
\end{tikzpicture}

%

\noindent
The internally intersecting paths are $P_i = \xi_i \pat a \pat \gamma$ and $P_j = \xi_j \pat b \pat a \pat \gamma$.
}
    
    $(\Leftarrow)$
First, assume that $A_6(G)$ holds because of condition {\bf (I)}. If there exist two different exits $\x_i$ and $\x_j$ with $i<j$ and two disjoint paths $\x_i\pat \gamma$ and $\x_j\pat \gamma'$ with $\gamma <_{\hat P} \gamma'$, then the algorithm first found $\gamma$ and recorded in $ClosGamma$ its order in $\hat P$ (or a smaller order, if the visit from $\xi_i$ found another node of $\hat P$ preceding $\gamma$) at the end of iteration $i$ of the {\bf for}-loop. In the following iterations, $ClosGamma$ cannot increase; so, when the algorithm finds $\gamma'$ in iteration $j$, it has that $o[\gamma'] > o[\gamma] \ge ClosGamma$ and returns {\sc Vuln} in line~\ref{line:NotOrdered}.

Finally, assume that condition {\bf (II)} of $A_6(G)$ holds; then, there are two paths $P_i:\x_i\pat\gamma$ and $P_j=\x_j\pat\gamma'$ (for $i<j$ and $\gamma,\gamma' \in \hat P$) that have their first intersection in a node $a$. 
    If $a$ is marked with a pair, then, since $\gamma$ is reachable from $a$, the backward visit from $\gamma$ will find $a$ and the algorithm returns {\sc Vuln} in line~\ref{line:NotDisjoint}. Assume that $a$ is not marked with a pair. Since $a$ is reachable from $\x_j$, if $a$ has not been marked with $j$, then the visit from $\x_j$ stopped in a node $b$ preceding $a$ and marked with a pair; since $b$ is reachable from $a$, the backward visit from $\gamma$ reaches $b$ and the algorithm returns {\sc Vuln}, like above.\footnote{By considering the example in the previous footnote, $a$ is not marked with $(i,j)$ because, when the visit from $\xi_j$ arrives to $b$ (and marks it with $(i,j)$), all edges outgoing from $b$ are already flagged and so the visit from $\xi_j$ stops in $b$ without reaching $a$.
    } 
If $a$ is marked with $j$ but not with $i$, the proof is analogous. 
\end{proof}

\end{document}